\newtheorem{assumption}{Assumption}
\newtheorem{proposition}{Proposition}
\newtheorem{remark}{Remark}
\newtheorem{example}{Example}
\def\bs{\boldsymbol}
\def\t{^{\top}}
\newcommand{\Perp}{\perp \! \! \! \perp}
\newcommand{\ve}{\varepsilon}
\begin{document}
{
\title{\textbf{Identification of Regression Models with a Misclassified and Endogenous Binary Regressor}
\author{Hiroyuki Kasahara\\
Vancouver School of Economics\\
University of British Columbia\\
hkasahar@mail.ubc.ca \and Katsumi Shimotsu\thanks{Address for correspondence: Katsumi Shimotsu, Faculty of Economics, University of Tokyo, 7-3-1, Hongo, Bunkyo-ku, Tokyo 113-0033, Japan. The authors are grateful to the co-editor, two anonymous referees, and participants of A Celebration of Peter Phillips' Forty Years at Yale Conference whose comments greatly improved the paper. This research is support by the Natural Science and Engineering Research Council of Canada and JSPS Grant-in-Aid for Scientific Research (C) No. 26380267.}\\
Faculty of Economics \\
University of Tokyo\\
shimotsu@e.u-tokyo.ac.jp
}}
\maketitle
}
\begin{abstract}
We study identification in nonparametric regression models with a misclassified and endogenous binary regressor when an instrument is correlated with misclassification error. We show that the regression function is nonparametrically identified if one binary instrument variable and one binary covariate satisfy the following conditions. The instrumental variable corrects endogeneity; the  instrumental variable must be correlated with the unobserved true underlying binary variable, must be uncorrelated with the error term in the outcome equation, but is allowed to be correlated with the misclassification error. The covariate corrects misclassification; this variable can be one of the regressors in the outcome equation, must be correlated with the unobserved true underlying binary variable, and must be uncorrelated with the misclassification error. We also propose a mixture-based framework for modeling unobserved heterogeneous treatment effects with a misclassified and endogenous binary regressor and show that treatment effects can be identified if the true treatment effect is related to an observed regressor and another observable variable.
\end{abstract}

\section{Introduction}

Misclassified endogenous binary regressors are prevalent in applications. Examples include self-reported educational attainment \citep{black03jasa}, self-reported participation in job training \citep{krueger98jle}, health insurance coverage reported by worker \citep{black00jasa} and participation to the Supplemental Nutrition Assistance Program (SNAP) formerly known as the Food Stamp Program \citep{kreider_etal12jasa}. For example, \citet{black03jasa} find that only 66.4\% of those reporting a professional degree in the 1990 Decennial Census have a professional degree, and \citet{meyer20jhr} find that 49.0\% of true food stamp recipient households do not report receipt in the Current Population Survey from 2002 to 2005.

We study identification in nonparametric regression models in the presence of a misclassified and endogenous binary regressor when a binary instrument controlling endogeneity is correlated with misclassification error. We consider the following model with a misclassified and endogenous binary regressor and the instrument variable (instrument) $Z$:
\begin{equation} \label{model0}
\begin{aligned}
Y &= g(X,  T^*) + \ve \\
& =  \alpha(X) + \beta(X)T^* + \varepsilon, \quad E[\varepsilon|X,Z]=0,
\end{aligned}
\end{equation}
where $Y$ is the outcome variable (for example, wage), $X$ is exogenous controls, and $\ve$ is an unobservable disturbance. $T^*$ is an unobservable binary regressor (for example, true educational qualification) which may be endogenous in the sense correlated with $\varepsilon$. $T$ is an observable misclassified measurement of $T^*$ (for example, self-reported schooling). Here, because the regressor $T^*$ is binary, its measurement error is necessarily nonclassical, i.e., $T^*-T$ is correlated with $T^*$. This makes identification difficult. 

A number of papers have studied regression models with an exogenous misclassified binary regressor. \citet{aigner73joe} characterizes the OLS asymptotic bias for such a model and develops a procedure to consistently estimate the coefficient of the misclassified binary regressor when the outside information on misclassification probabilities is available. More recently,  \citet{lewbel07em} shows that  the difference $E[Y|X,T^*=1]-E[Y|X,T^*=0]$ can be identified using an instrument that is mean independent of the change in outcome variable associated with the change in $T^*$ when the instrument takes at least three values. \citet{mahajan06em}  shows that the conditional mean of outcome variable $Y$ given $T^*$ is identified while  \citet{hu08joe} provides related identification results when the discrete regressor takes more than two values.    \citet{battistin14joe} examine the identification of the returns to educational qualifications when repeated misclassified measurements are available. \citet{black00jasa} and \citet{kaneetal99nber} show identification when repeated misclassified measurements of a binary regressor are available. 


Only a few papers analyze identification of regression models when a misclassified binary regressor is endogenous. In particular, \citet{mahajan06em} shows that $\alpha(X)$ and $\beta(X)$ are identified when there exists a binary instrument variable $Z$ that satisfies the conditional independence from $T$ given by
\begin{equation} \label{CI-TZ}
T \Perp Z \text{ conditionally on } (T^*,X), 
\end{equation}
in addition to the standard relevance condition and exclusion restriction as well as some other assumptions. However,
 \citet{ditraglia19joe} show that the assumptions in \citet{mahajan06em} imply that $E[\varepsilon|X,T^*]=0$, namely, $T^*$ is exogenous.  As a result, identification of the model (\ref{model0}) under endogenous $T^*$ has remained an open question.  
 
As pointed out by \citet{ditraglia19joe}, the reason why \citet{mahajan06em} cannot identify $\alpha(X)$ and $\beta(X)$ under endogenous $T^*$ is that \citet{mahajan06em} uses only one binary instrument $Z$ to control two sources of endogeneity, i.e., misclassification in $T$ and endogeneity in $T^*$.  
 
Some recent studies provide related identification conditions for models with an endogenous misclassified regressor $T^*$  while maintaining the assumption (\ref{CI-TZ}) that the instrument $Z$ is not only independent of $\varepsilon$ but also independent of $T$ conditional on $T^*$. 
\citet[][Theorem 2.3]{ditraglia19joe} provide the point identification of $\beta(X)$ under the higher-order independence assumption 
\begin{equation}\label{assn_ditraglia}
E[\varepsilon^2|X,Z] = E[\varepsilon^2|X],\  \ E[\varepsilon^3|X,Z] = E[\varepsilon^3|X],\  \text{and}\ E[\varepsilon^2|X,Z, T^*, T] = E[\varepsilon^2|X, Z, T^*].
\end{equation}
\citet{nguimkeu_etal19joe} analyze the  (local)  identification of a parametric model with endogenous treatment and endogenous misclassification using exclusion restrictions.  Their identification argument builds on that of \citet{poirier80joe}.  Both \citet[][Assumption 2.2.(i)]{ditraglia19joe} and \citet[][Assumption 1]{nguimkeu_etal19joe} assume  that the misclassification probability is not affected by the instrument $Z$ conditional on other observables. Other related studies include \citet{hu_etal15ej, hu_etal16el} who address the identification of nonseparable models with mismeasured endogenous regressor but their Assumption 2.1 also assumes that the instrument $Z$ is independent of $T$ conditional on $(T^*,X)$.  In these  studies, the instrument $Z$ has to satisfy two different exclusion restrictions: one from the outcome equation and the other from the misclassification probability. 

In empirical applications, a researcher chooses the instrument $Z$ such that $Z$ is relevant for $T^*$ and  is excluded  from the outcome equation; whether $Z$ is excluded from the misclassification probability or not is often a secondary concern given the difficulty of finding a valid instrument that satisfies both the relevance condition and the exclusion restriction from the outcome equation. When an endogenous binary regressor is a self-reported variable, however, the instrument $Z$ may be correlated with the misclassification error as the following examples illustrate.

\begin{example}[Supplemental Nutrition Assistance Program (SNAP)]
Many empirical papers analyze the effect of participation in SNAP on health outcomes or food insecurity, where $Y$ is a health outcome or food insecurity variable and $T$ is self-reported participation in SNAP. As the instrument $Z$ for controlling endogeneity in $T^*$, some studies use variables that affect the cost and benefit of participating in SNAP such as whether the state uses biometric identification technology (i.e., fingerprint scanning) and the percentage of SNAP benefits issued by direct mail rather than electronic benefit transfer (EBT) \citep[e.g.,][]{meyerhoefer08ajae, yen08ajae, almada16ajae}. For example, the use of EBT may help encourage SNAP participation by mitigating the stigma associated with SNAP participation \citep{yen08ajae}. Such state policies may be correlated with the misreporting error for the following two reasons.

First, as \citet[][Section 5.1]{boundetal01handbook} discuss, the salience of behavioral experience may affect the likelihood of misreporting by helping retrieval of the information from memory. In this context, the required use of fingerprint scanning may help people remembering that they participate in SNAP because the event of having their fingerprints taken may strengthen the memory trace. Similarly, receiving SNAP benefits by direct mailing may be more salient than receiving SNAP benefits by EBT because direct mailing requires visiting a bank and depositing a check. Therefore, the misreporting error can be correlated with state policies on the use of fingerprint scanning and the receipt of SNAP benefits by direct mail.  

Second, the correlation between misreporting error and state policies may arise because of unobserved heterogeneous psychological costs of participating in SNAP, or stigma. Stigma may induce SNAP recipients to make a false statement on their SNAP participation. As a result, the likelihood of misreporting may depend on the unobserved stigma level. On the other hand, true participation status is affected by both the use of EBT and stigma level, inducing the correlation between stigma level and the use of EBT conditional on true SNAP participation. As a result, the misreporting error may be correlated with the use of  EBT through the unobserved stigma level.\footnote{To see this, let $\omega \in\{0,1\}$ be stigma level. Then, $\Pr(T=1|T^*,\omega)\neq \Pr(T=1|T^*)$ if $\omega$ affects the likelihood of making a false statement. When both $\omega$ and $Z$ affect the true participation status $T^*$, by Bayes' theorem, $\Pr(\omega=1|T^*,Z)\neq \Pr(\omega=1|T^*)$ in general. Consequently, $\Pr(T=1|T^*,Z) = \sum_{\omega} \Pr(T=1|T^*,\omega)\Pr(\omega|T^*,Z)\neq \Pr(T=1|T^*)$.\label{footnote1}} Similarly, the misreporting error  may be correlated with the required use of fingerprint scanning if the true participation status is affected by both unobserved stigma and the required use of fingerprint scanning.  
\end{example}
 
\begin{example}[Returns to Education]
Consider analyzing returns to education on wages, where outcome variable $Y$ is logarithm of wage, $T^*$ is true educational attainment, and $T$ is self-reported educational attainment. Self-reported educational attainment may be misclassified when a respondent makes a false statement \citep{kaneetal99nber, battistin14joe}.  \cite{card93nber} proposes using college proximity as an instrument $Z$ for controlling endogeneity in $T^*$ given that college proximity affects the cost of attending colleges. College proximity could be correlated with the misreporting error of educational attainment for the following reasons. 

First, if a person spent her childhood in the same city or if her parents live in the same city as the location of the college she attended, then this makes it easier to retrieve the memory of attending colleges and reduces the chance of misreporting. Second, the correlation between misreporting error and college proximity can arise because of unobserved stigma of not obtaining a college degree. The likelihood of falsely reporting that one has obtained a college degree may depend on stigma levels associated with not obtaining a college degree. At the same time, stigma and college proximity may be correlated conditional on true educational attainment because the decision of obtaining a college degree is affected by both college proximity and  stigma. Then, as discussed in footnote \ref{footnote1}, the misclassification error could be correlated with college proximity  through unobserved stigma. 
\end{example}



To the best of our knowledge, none of the existing papers establishes identification of models with a misclassified endogenous binary regressor when an instrument is correlated with misclassification errors. This paper fills this gap. Specifically, we relax the assumption (\ref{CI-TZ}) and show identification when one of the covariates in the outcome equation, denoted by $V$, satisfies an exclusion restriction from the misclassification probability, i.e., 
\[
 T  \Perp V \text{ conditionally on } (T^*, X,Z), 
\] 
where the model  (\ref{model0}) is now written as 
\begin{equation} \label{model1}
\begin{aligned}
Y & =  \alpha(X,V) + \beta(X,V)T^* + \varepsilon, \quad E[\varepsilon|X, Z, V] = 0.
\end{aligned}
\end{equation}
Because $E[\varepsilon|X, Z, V]=0$, $V$ can be one of the covariates in the outcome equation. As in the existing literature, $V$ also needs to be relevant for $T^*$ in that $V$ changes the distribution of $T^*$. Unlike the existing literature, however, we allow $Z$ to affect the misclassification probability.

Choosing the variable $V$ in empirical applications may not be easy. One possibility is to refer to the existing studies that examine the determinant of misreporting errors, which may be survey-specific. 

\setcounter{example}{0}
\begin{example}[SNAP, continued]\cite{meyer20jhr} link administrative data on SNAP participation to the American Community Survey (ACS) and the Current Population Survey (CPS) in Illinois and Maryland and find high false negative rates of 33 and 49 percent in the ACS and CPS, respectively. They  examine how false negatives and false positives of self-reported SNAP participation are associated with individual's observed characteristics. Their estimates suggest that misreporting is not statistically associated with a gender dummy conditional on other observed characteristics.  Similarly, linking administrative data to the National Household Food Acquisition and Purchase Survey (FoodAPS), \cite{kang19soej}  find that neither false negative nor false positive is  statistically associated with a gender dummy. Given that gender is likely to be one of the key determinants of health outcome as well as SNAP participation, these results suggest that we may use a gender dummy for the variable $V$ when we use ACS, CPS, or FoodAPS to study the effect of SNAP participation on health outcome.
\end{example}

\begin{example}[Returns to Education, continued]
\cite{bingley14wp} link the Survey of Health, Ageing and Retirement in Europe (SHARE) to Danish Administrative Registers and examine the determinants of misclassification of self-reported schooling. They find that, conditional on true educational qualification and its interaction with income levels, misclassification error is correlated with neither gender nor age. This result suggests that gender and age are possible candidates for the variable $V$.
\end{example}

Figure \ref{fig1} compares the relationship among $Y$, $T^*$, $T$, $Z$, and $V$ in this paper with those in some recent studies.  Our  Proposition \ref{prop_1} in Figure \ref{fig1}(a) does not assume that  $Z$ is independent of $T$ conditional on $T^*$ while the existing studies such as  \citet{ditraglia19joe} and   \citet{nguimkeu_etal19joe} assume that $Z$ is excluded from the misclassification probability in Figure \ref{fig1}(b)(c).\footnote{\citet{nguimkeu_etal19joe} assume that the measurement error is not non-differential.} Figure \ref{fig1}(d) illustrates the approach of \citet{black00jasa}, \citet{kaneetal99nber}, and \citet{battistin14joe}, who use two conditionally independent measurements of $T^*$. In our setup, $Z$ and $V$ can be correlated to each other conditional on $T^*$ so that our identification argument is different from theirs.

\begin{figure}[h]
\begin{subfigure}{.47\textwidth}
  \centering 
\begin{tikzpicture}
  \matrix (m) [matrix of math nodes,row sep=3em,column sep=4em,minimum width=2em]
  {
    V  & & T \\
    Z &T^* &Y\\ };
  \path[-stealth]
    (m-1-1)   edge node [below] { } (m-2-2)  
            edge node [below] { } (m-2-3)  
    (m-2-1)  edge node [below] { } (m-2-2)
      edge[red]  node [below] { } (m-1-3)
    (m-2-2) edge node [left] { } (m-1-3)   
    edge node [left] { } (m-2-3)    ;
\end{tikzpicture} 
\caption{This paper's Proposition \ref{prop_1}: $Z$ can be correlated with $T$ conditional on $T^*$ } 
 \end{subfigure} 
  \begin{subfigure}{.47\textwidth}
  \centering 
\begin{tikzpicture}
  \matrix (m) [matrix of math nodes,row sep=3em,column sep=4em,minimum width=2em]
  {
     & & T \\
    Z &T^* &Y\\ };
  \path[-stealth]
    (m-2-1) 
            edge node [below] { } (m-2-2)
    (m-2-2) edge node [left] { } (m-2-3)  
            edge node [below] { } (m-1-3)   ;
\end{tikzpicture} 
\caption{ \citet{ditraglia19joe}: $Z$ is conditionally independent of  $Y$ up to the third moments}
 \end{subfigure}   
 \begin{subfigure}{.47\textwidth}
  \centering 
\begin{tikzpicture}
  \matrix (m) [matrix of math nodes,row sep=3em,column sep=4em,minimum width=2em]
  {
    V  & & T \\
    Z &T^* &Y\\ };
  \path[-stealth]
    (m-1-1) edge node [left] { } (m-1-3)  
            edge node [below] { } (m-2-2)  
            edge node [below] { } (m-2-3)  
    (m-1-3)  edge node [below] { } (m-2-3)
    (m-2-1)  edge node [below] { } (m-2-2)
    (m-2-2) edge node [left] { } (m-1-3)   
    		edge node [left] { } (m-2-3)    ;
\end{tikzpicture} 
\caption{\citet{nguimkeu_etal19joe} }   
 \end{subfigure}  
\begin{subfigure}{.5\textwidth}
  \centering 
\begin{tikzpicture}
  \matrix (m) [matrix of math nodes,row sep=3em,column sep=4em,minimum width=2em]
  {
     & &  T_1\\
     & &  T_2\\
    Z &T^* &Y \\};
  \path[-stealth]
    (m-3-1) 
            edge node [left] { } (m-3-2)
    (m-3-2) edge node   [left] { }  (m-1-3)  
     edge node   [left] { }  (m-2-3)  
        edge node [left] { }   (m-3-3) ;
\end{tikzpicture} 
\caption{Two repeated measures for $T^*$: \citet{black00jasa}, \citet{kaneetal99nber}, \citet{battistin14joe} }  
 \end{subfigure} 
 \caption{Comparison of the relationships between the outcome $Y$, true unobserved regressor $T^*$,  misclassified regressor $T^*$, instrument $Z$, and other covariate/instrument/measurement $V$ in some  published papers. Each arrow represents the dependence while a lack of arrow represents the conditional independence. } \label{fig1} 
\end{figure}
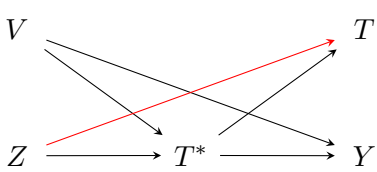
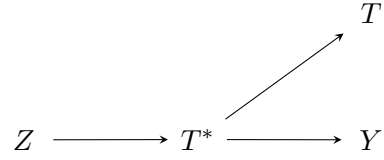
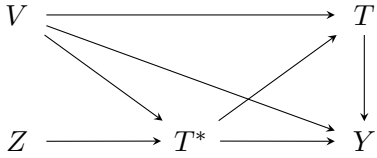
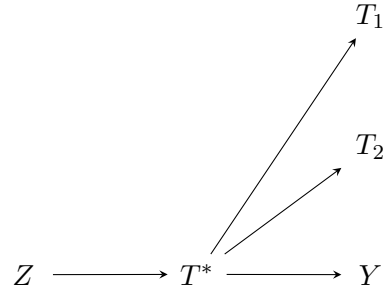

Our identification result is useful for empirical applications. To apply our identification result, the researcher needs to find one of the covariates that is correlated with endogenous regressor $T^*$ but does not affect misclassification. As in the examples above with SNAP and education qualification, the existing studies that link survey data to administrative data provide some guidance for choosing the covariate $V$. With such a covariate, we may weaken the requirement for the instrument $Z$ by allowing $Z$ to be correlated with the misclassification error.
 
The identification of the local average treatment effect (LATE) under mismeasured treatment was studied by \citet{yanagi19er}, \citet{ura18qe}, and  \citet{calvi19wp}. In his Assumption 4.3, \citet{yanagi19er} assumes that $V$ is excluded from both the outcome equation and the misclassification equation conditional on $T^*$, essentially giving an instrumental variable which shifts the  distribution of $T^*$  without affecting the outcome variable as well as the misclassification probability. \citet{ura18qe} obtains bounds for LATE under mismeasured treatment and standard LATE instrument assumptions. Using two different misclassified treatment indicators, \citet{calvi19wp} provide a point identification result for, what they call, the mismeasurement robust LATE which is generally different from the standard LATE. \citet{botosaru18jae} show that the average treatment effect on the treated is identifiable from repeated cross-section data when the treatment status is observed only either before or after the implementation of a program if there is a proxy variable for the latent treatment. \citet{Tommasi20} study identification and inference for the
bounds of the weighted average of local average treatment effects. 

The model (\ref{model1}) assumes that the individual treatment effect does not depend on unobservables. To examine heterogeneous treatment effects, we extend the model (\ref{model1}) by allowing $\alpha(\cdot)$ and $\beta(\cdot)$ to depend on an unobserved random variable $U^*$ that has a finite support. We generalize our identification result to this model with heterogeneous treatment effects when a mismeasured observable measure (proxy) for $U^*$ is available and show that the average treatment effect, the average treatment effect on the treated, the average treatment effect on the untreated, and the LATE are identified.

The remainder of this paper is organized as follows. Section 2 introduces the model and assumptions and derives identification results. Section 3 briefly discusses estimation and inference.  Section 4 shows identification of a heterogeneous treatment effect model.  Section 5 concludes.  Proofs are collected in Section 6.  All limits below are taken as $n \rightarrow \infty$. Let $: = $ denote ``equals by definition.'' For a $k\times 1$ vector $a$ and a function $f(a)$, let $\nabla_{a}f(a)$ denote the $k\times 1$ vector of the derivative $(\partial/ \partial {a}) f(a)$.

\section{Identification of the Model with a Misclassified Endogenous Binary Regressor}\label{sec:ident}

Throughout the paper, we assume that both $Z$ and $V$ are binary random variables with their support given by $\{0,1\}$. In this section, we suppress the exogenous regressor $X$ for brevity. The whole material remains valid conditional on $X$ if all the assumptions are imposed conditional on $X$.  We establish the identification of the model (\ref{model1}) under the following assumptions.
\begin{assumption}\label{assn_1} The following holds for any $(Z,V) \in \{0,1\}^2$ unless stated otherwise.
(a) $E[\ve|T^*,T,Z,V] = E[\ve|T^*,Z,V]$. (b) $E[\ve|T^*,Z,V] = E[\ve|T^*,V]$. (c) $\Pr(T^*=1|Z=0,V) \neq \Pr(T^*=1|Z=1,V)$. (d) $\Pr(T=1|T^*,Z,V) = \Pr(T=1|T^*,Z)$. (e) $\Pr(T^*=1|Z,V=0) \neq \Pr(T^*=1|Z,V=1)$. (f)  $0< \Pr(T=1|T^*=0,  Z)  < \Pr(T=1|T^*=1,  Z) <1$. (g) $\beta(V) \neq E[\ve|T^*=0,V]- E[\ve|T^*=1,V]$. (h) $0<\Pr(T^*=1|Z,V) <1$.
\end{assumption}
\begin{assumption}\label{assn_2} 
\begin{align*}
& \frac{\Pr(T^*=0|Z=0,V=0)\Pr(T^*=0|Z=1,V=1)}{\Pr(T^*=0|Z=1,V=0)\Pr(T^*=0|Z=0,V=1 )} \\
& \neq \frac{\Pr(T^*=1|Z=0,V=0)\Pr(T^*=1|Z=1,V=1)}{\Pr(T^*=1|Z=1,V=0)\Pr(T^*=1| Z=0,V=1 )}.
\end{align*}
\end{assumption}

\setcounter{example}{1}
\begin{example}[Returns to Education, continued]
In the return to education example, gender and age are possible candidates for the variable $V$. Gender and age are likely to satisfy Assumption \ref{assn_1}(d): $\Pr(T=1|T^*,Z,V,X) = \Pr(T=1|T^*,Z,X)$ because misclassification error is correlated with neither gender nor age conditional on true educational qualification and its interaction with income levels. Further, gender and age are likely to satisfy Assumption \ref{assn_1}(e): $\Pr(T^*=1|Z,V=0,X) \neq \Pr(T^*=1|Z,V=1,X)$ because these variable are correlated with true educational qualification conditional on other covariates.
\end{example}

Assumption \ref{assn_1} is a straightforward generalization of the assumptions in the current literature. Assumption \ref{assn_1}(a) assumes that the self-reported treatment status $T$ does not provide any additional information on the mean of $\ve$, and hence $Y$, given the knowledge of $T^*$, instrument $Z$, and covariate $V$ (and the exogenous regressor $X$). In particular, the error term $\ve$ is conditionally mean independent of the misclassification error conditional on $(T^*, Z,V)$. This is often referred to as ``non-differential measurement error."   While this assumption is standard in the misclassification literature (e.g.,  equation (1) of  \citet{mahajan06em} and Assumption 2.2.(iii) of \citet{ditraglia19joe}), it is potentially restrictive. In the context of the SNAP example,  this assumption may  be violated if misreporting of SNAP participation status due to unobserved stigma  is correlated with unobserved factors that affect  health outcome or food insecurity even after controlling for observed covariates  and true participation status. Similarly, in the returns to education example, this assumption may be violated if lying about college completion leads to higher earnings. In both cases, conditioning on a rich set of observed characteristics may mitigate the concern for the violation of this assumption. 

Assumption \ref{assn_1}(b) and (c) are the standard instrumental variable assumptions. Assumption \ref{assn_1}(b) states that the instrument $Z$ has to be excluded from the outcome equation while Assumption \ref{assn_1}(c) requires that $Z$ must be relevant for the true regressor $T^*$.    Combining Assumptions 1(a) and 1(b) gives $E[\ve|T^*,T,Z,V]=E[\ve|T^*,Z,V]=E[\ve|T^*, V]$.  Instrument validity, i.e., the validity of Assumption \ref{assn_1}(b)(c), is an important issue in empirical applications.

In Assumption \ref{assn_1}(d), we relax one requirement on the instrumental variable $Z$ in the existing misclassification literature by allowing $Z$ to be correlated with the misclassification probability, thus relaxing Assumption 3 in \citet{mahajan06em} and Assumption 2.2(i) in \citet{ditraglia19joe}. At the same time, Assumptions \ref{assn_1}(d) and \ref{assn_1}(e) require the existence of a covariate $V$ that affects the true regressor $T^*$ but does not affect misclassification error given $Z$ and other covariates. As discussed in the above examples,  existing studies linking administrative and survey data analyze how the misclassification error is associated with observed covariates, providing some guidance on how to choose $V$ from a set of observed covariates.  For instance, in the case of SNAP participation, a gender dummy can be used for $V$ because it may not be correlated with misclassification error in SNAP participation once other covariates are conditioned on but is likely to be correlated with true SNAP participation status $T^*$. 

Assumption \ref{assn_1}(f) requires that $T^*$ changes the mean of $T$ and corresponds to Assumption 2 in \citet{mahajan06em} and Assumption 2.2(ii) in \citet{ditraglia19joe}. Assumption \ref{assn_1}(f)  holds when $\Pr(T=1|T^*=1,  Z)>1/2$ and $\Pr(T=0|T^*=0)>1/2$, i.e., the value of $T^*$ is informative on the value of $T$.

Assumption \ref{assn_1}(g) requires that $T^*$ changes the conditional mean $E[Y|T^*,V]$ because taking the conditional expectation of model (\ref{model1}) conditional on $(T^*,V)$ gives
\begin{equation}\label{mean_Y}
E[Y|T^*=1,V] - E[Y|T^*=0,V] = \beta(V) + E[\ve|T^*=1,V] - E[\ve|T^*=0,V] \neq 0.
\end{equation}
In the SNAP example, this assumption holds if the conditional mean of health outcome given covariates differs between the SNAP recipients and non-recipients. Assumption \ref{assn_1}(h) holds if model (\ref{model1}) is nontrivial; if this assumption is violated, then there is no variation in the treatment status and identifying the treatment effect is impossible.

Assumption \ref{assn_2} holds if the changes in $(Z,V)$ induce sufficient variation in $\Pr(T^*=0|Z,V)$. A necessary condition for Assumption \ref{assn_2} is that both $Z$ and $V$ are relevant for $T^*$, namely, Assumptions \ref{assn_1}(c) and \ref{assn_1}(e) hold. 
In the SNAP example with gender dummy as $V$, the relevance of $V$ for $T^*$ holds if the true SNAP participation probability differs between male and female conditional on instrument $Z$ and other exogenous covariates $X$.

Under Assumption \ref{assn_1}(a)(b)(d), we obtain the following decomposition of $E[Y|Z,V]$, $E[T|Z,V]$ (see the proof of Proposition \ref{prop_1} for derivation):
\begin{equation}\label{system}
\begin{aligned}
E[Y|Z,V] & = E[Y|T^*=0,V] \Pr(T^*=0|Z,V)  + E[Y|T^*=1,V]\Pr(T^*=1|Z,V),   \\
E[T|Z,V] & = E[T|T^*=0,Z]\Pr(T^*=0|Z,V)  + E[T|T^*=1,Z]\Pr(T^*=1|Z,V), \\
E[YT|Z,V] & =  E[Y|T^*=0,V] E[T|T^*=0,Z] \Pr(T^*=0|Z,V) \\
& \quad +  E[Y|T^*=1,V] E[T|T^*=1,Z] \Pr(T^*=1|Z,V). 
\end{aligned}
\end{equation}
Evaluating them at $(Z,V) \in \{0,1\}^2$ gives 12 equations for 12 unknowns $\{E[Y|T^*=0,V],E[Y|T^*=1,V], \Pr(T^*=1|Z,V) , E[T|T^*=0,Z], E[T|T^*=1,Z]: (Z,V) \in \{0,1\}^2 \}$. Assumption \ref{assn_1}(e)--(g) and Assumption \ref{assn_2} enable us to solve these equations for a unique solution. When $Z$ is uncorrelated with $T^*$,  Assumption \ref{assn_2} does not hold and the system of equations (\ref{system}) fails to have a unique solution. $\alpha(V)$ and $\beta(V)$ are identified from the relation
\begin{equation}\label{alpha_beta}
\begin{pmatrix}
\alpha(V) \\
\beta(V)
\end{pmatrix}
=
\begin{pmatrix}
1 & \Pr(T^*=1|Z=0,V) \\
1 & \Pr(T^*=1|Z=1,V) 
\end{pmatrix}^{-1}
\begin{pmatrix}
E[Y|Z=0,V] \\
E[Y|Z=1,V] 
\end{pmatrix},
\end{equation}
and Assumption \ref{assn_1}(c).

The following proposition provides the main identification result of this paper. 

\begin{proposition}\label{prop_1}
Suppose that Assumptions \ref{assn_1} and \ref{assn_2} hold. Then, $\alpha(V)$, $\beta(V)$, $\Pr(T=1|T^*,Z)$, and $\Pr(T^*=1|Z,V)$ are identified for all $(T^*,Z,V)$.
\end{proposition}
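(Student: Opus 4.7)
My plan is to verify the decomposition (\ref{system}) and then invert it as a system of 12 equations in the 12 unknowns $\{m_{t,v},q_{t,z},p_{z,v}\}:=\{E[Y|T^*=t,V=v],\,E[T|T^*=t,Z=z],\,\Pr(T^*=1|Z=z,V=v)\}$. For the decomposition itself, I apply iterated expectations conditional on $T^*$. The first line uses Assumption \ref{assn_1}(b), which gives $E[Y|T^*,Z,V]=\alpha(V)+\beta(V)T^*+E[\ve|T^*,V]=E[Y|T^*,V]$; the second line is immediate from Assumption \ref{assn_1}(d). For the third line, Assumption \ref{assn_1}(a) implies the factorization $E[\ve T|T^*,Z,V]=E[T|T^*,Z,V]\,E[\ve|T^*,Z,V]=E[T|T^*,Z]\,E[\ve|T^*,V]$, hence $E[YT|T^*,Z,V]=E[Y|T^*,V]\,E[T|T^*,Z]$, and taking expectations over $T^*$ given $(Z,V)$ yields the displayed mixture form.

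Next, I exploit the block structure of (\ref{system}) to reduce the count of unknowns. For each fixed $v$, Assumption \ref{assn_1}(c) makes the $2\times 2$ mixing matrix with rows $(1-p_{z,v},p_{z,v})$, $z=0,1$, invertible, so the two $Y$-equations uniquely determine $(m_{0,v},m_{1,v})$ in closed form as linear functions of $(p_{0,v},p_{1,v})$ and observables. Symmetrically, for each fixed $z$, Assumption \ref{assn_1}(e) lets the two $T$-equations determine $(q_{0,z},q_{1,z})$ from $(p_{z,0},p_{z,1})$. Substituting these expressions into the four $YT$-equations, and using that $\mathrm{Cov}(Y,T|T^*,Z,V)=0$ under Assumption \ref{assn_1}(a) together with the law of total covariance, a direct calculation reduces the $YT$ block to
\[
p_{z,v}(1-p_{z,v})=\gamma_{z,v}\,(p_{1,v}-p_{0,v})(p_{z,1}-p_{z,0}),\qquad (z,v)\in\{0,1\}^2,
\]
where $\gamma_{z,v}$ is an observable ratio formed from the conditional covariance of $(Y,T)$ and the two slopes $E[Y|Z=1,V=v]-E[Y|Z=0,V=v]$ and $E[T|Z=z,V=1]-E[T|Z=z,V=0]$, both nonzero by Assumption \ref{assn_1}(c),(e),(f),(g).

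The main obstacle is to show that this reduced $4\times 4$ nonlinear system admits a unique solution in the admissible open region cut out by Assumption \ref{assn_1}(f),(g),(h). My plan here is to take pairwise ratios of the four equations, which, after cancellations, express the observable cross-ratio $\gamma_{0,0}\gamma_{1,1}/(\gamma_{0,1}\gamma_{1,0})$ as the cross-ratio of $\{p_{z,v}(1-p_{z,v})\}$, i.e., as the product of the cross-ratios of $\{p_{z,v}\}$ and $\{1-p_{z,v}\}$. Assumption \ref{assn_2} is precisely the statement that these two cross-ratios are distinct, which together with the individual quadratic constraints and the interiority restriction from Assumption \ref{assn_1}(h) is what disentangles the two factors and rules out spurious algebraic roots, pinning down a unique $\{p_{z,v}\}\in(0,1)^4$.

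Once $\{p_{z,v}\}$ is identified, $\Pr(T=1|T^*,Z)=q_{t,z}$ is recovered by inverting the $T$-equations (using Assumption \ref{assn_1}(e)), and $\alpha(V),\beta(V)$ follow from (\ref{alpha_beta}) together with Assumption \ref{assn_1}(c). The expected bottleneck is the global uniqueness step: verifying that Assumption \ref{assn_2} is exactly the missing non-degeneracy that prevents the reduced polynomial system from having a second admissible root.
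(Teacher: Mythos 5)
Your derivation of the moment decomposition (\ref{system}) is correct and matches the paper's (the factorization $E[YT|T^*,Z,V]=E[Y|T^*,V]\,E[T|T^*,Z]$ via Assumption \ref{assn_1}(a)(b)(d) is exactly the paper's step), and your reduction of the $YT$-block to $p_{z,v}(1-p_{z,v})=\gamma_{z,v}(p_{1,v}-p_{0,v})(p_{z,1}-p_{z,0})$ via the within-component zero covariance is sound algebra. But the proof has a genuine gap precisely where you flag the "expected bottleneck": the global uniqueness of the solution to this $4\times 4$ quadratic system is asserted, not proved. Worse, the one piece of evidence you offer does not support the assertion. Taking the cross-ratio of your four equations, the $(p_{1,v}-p_{0,v})(p_{z,1}-p_{z,0})$ factors cancel completely, so all you learn is that the observable quantity $\gamma_{0,0}\gamma_{1,1}/(\gamma_{0,1}\gamma_{1,0})$ equals the \emph{product} $R_1 R_0$ of the cross-ratio of $\{p_{z,v}\}$ and the cross-ratio of $\{1-p_{z,v}\}$. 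Knowing a product of two numbers together with the fact that they are unequal (Assumption \ref{assn_2}) does not determine either factor, so "disentangling the two factors" is not an argument; you would still have to show directly that the polynomial system has no second admissible root, which is the entire difficulty. You also never address the label-swapping indeterminacy: the map $p_{z,v}\mapsto 1-p_{z,v}$, $m_{0,v}\leftrightarrow m_{1,v}$, $q_{0,z}\leftrightarrow q_{1,z}$ leaves every observable moment in (\ref{system}) unchanged, so a "unique $\{p_{z,v}\}\in(0,1)^4$" is impossible without invoking the ordering in Assumption \ref{assn_1}(f) to fix the labeling of $T^*$.

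The paper avoids your bottleneck entirely by recasting (\ref{system}) as the matrix factorization $Q(Z,V)=L_T(Z)\Lambda(Z,V)L_Y(V)^{\top}$ and forming the observable matrix $\tilde Q:=Q(0,0)Q(1,0)^{-1}Q(1,1)Q(0,1)^{-1}=L_T(0)\tilde\Lambda L_T(0)^{-1}$ with $\tilde\Lambda:=\Lambda(0,0)\Lambda(1,0)^{-1}\Lambda(1,1)\Lambda(0,1)^{-1}$. The diagonal entries of $\tilde\Lambda$ are exactly the two cross-ratios appearing in Assumption \ref{assn_2}; their distinctness makes the eigenvalues of $\tilde Q$ simple, so the eigenvectors --- the columns of $L_T(0)$, normalized to have first entry $1$ --- are uniquely determined up to ordering, and Assumption \ref{assn_1}(f) fixes the ordering. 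This is where Assumption \ref{assn_2} actually does its work: eigenvalue separation, not factor disentanglement. Everything else ($L_Y(0)$ from the transposed product, $\Lambda(Z,V)$ by matrix inversion, $\alpha(V),\beta(V)$ from (\ref{alpha_beta})) then follows by linear algebra. If you want to salvage your direct route, you need to supply a genuine uniqueness proof for the reduced quadratic system; as written, the central identification claim is unestablished.
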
 


The key assumption in Proposition \ref{prop_1} that is different from those in the existing papers is that we allow $Z$ to affect the misclassification probability. \citet{mahajan06em} and \citet{ditraglia19joe} use one instrument $Z$ and assume $Z$ is independent of $T$ conditional on $T^*$. Our identification condition relaxes the requirement for $Z$ in the existing studies by alternatively assuming that one of the covariates in the outcome equation satisfies the exclusion restriction from the misclassification probability.

For clarification, we make the following two remarks. 
\begin{remark}
Proposition \ref{prop_1} holds even if $E[T|T^*,Z]$ does not depend on $Z$. Hence, $Z$ may or may not affect the misclassification probability.
\end{remark} 
\begin{remark} 
Proposition \ref{prop_1} holds even if $\alpha(V)=\alpha$ and $\beta(V)=\beta$. That is, the variable $V$ may or may not be one of the covariates in the outcome equation. 
\end{remark} 

We also consider an alternative set of assumptions in which $V$ does not satisfy the relevance condition, namely, $\Pr(T^*=1|Z,V)$ does not depend on $V$. Even in this case, we can identify the model if  the misclassification probability does not depend on $Z$. 

\begin{assumption} \label{assn_3}
Assumption \ref{assn_1} holds except for part (e), $\Pr(T^*=1|Z,V) = \Pr(T^*=1|Z)$ for all $Z$, and Assumption \ref{assn_1}(d) is strengthened to $\Pr(T=1|T^*,Z,V) = \Pr(T=1|T^*)$ for all $(Z,V)$.
\end{assumption}
\begin{proposition}\label{prop_2}
Under Assumption \ref{assn_3}, $\alpha(V)$, $\beta(V)$, $\Pr(T=1|T^*)$, and $\Pr(T^*=1|Z)$ are identified for all $(T^*,Z,V)$.
\end{proposition}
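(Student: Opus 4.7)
My plan is to adapt the identification argument for Proposition \ref{prop_1} to exploit the sharper restrictions in Assumption \ref{assn_3}: the misclassification parameters $a_{t^*} := \Pr(T=1|T^*=t^*)$ do not depend on $(Z,V)$, and $p_z := \Pr(T^*=1|Z)$ does not depend on $V$. Since Assumption \ref{assn_3} retains Assumption \ref{assn_1}(a)(b)(d), the derivation of (\ref{system}) still applies and simplifies to
\begin{align*}
y_{zv} &= m_{0v}+(m_{1v}-m_{0v})p_z,\\
t_z &= a_0+(a_1-a_0)p_z,\\
yt_{zv} &= m_{0v}a_0+(m_{1v}a_1-m_{0v}a_0)p_z,
\end{align*}
where $y_{zv}:=E[Y|Z=z,V=v]$, $t_z:=E[T|Z=z]$, $yt_{zv}:=E[YT|Z=z,V=v]$, and $m_{t^*,v}:=E[Y|T^*=t^*,V=v]$. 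Crucially, $t_z$, $a_0$, $a_1$, and $p_z$ are shared across $v$.

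The key algebraic observation I will exploit is the pair of identities (verified by direct substitution into the three equations above)
\[
yt_{zv}-a_0\,y_{zv}=m_{1v}(t_z-a_0),\qquad yt_{zv}-a_1\,y_{zv}=m_{0v}(t_z-a_1),
\]
valid for every $(z,v)$. Because each right-hand side is independent of $z$, equating the ratio $(yt_{zv}-a\,y_{zv})/(t_z-a)$ at $z=0$ and $z=1$ yields, for each fixed $v\in\{0,1\}$,
\[
(yt_{0v}-a\,y_{0v})(t_1-a)=(yt_{1v}-a\,y_{1v})(t_0-a),
\]
which expands to a polynomial of degree at most two in $a$ whose coefficients are functions of the identified observables $y_{zv}, t_z, yt_{zv}$. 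By the two identities, both $a_0$ and $a_1$ satisfy this polynomial, so $\{a_0,a_1\}$ coincides with its root set.

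Next I will verify that this polynomial is a genuine quadratic with two distinct real roots. Its leading coefficient equals $y_{0v}-y_{1v}=(m_{0v}-m_{1v})(p_0-p_1)$, which is nonzero because $p_0\neq p_1$ (Assumption \ref{assn_1}(c)) and $m_{0v}\neq m_{1v}$ (equivalent to Assumption \ref{assn_1}(g) via (\ref{mean_Y})). Assumption \ref{assn_1}(f) gives $0<a_0<a_1<1$, so the roots are distinct and the ordering uniquely identifies $a_0$ as the smaller root and $a_1$ as the larger, thereby identifying $\Pr(T=1|T^*)$. Substituting into the second equation then gives $p_z=(t_z-a_0)/(a_1-a_0)$, which identifies $\Pr(T^*=1|Z)$.

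Finally, since $E[\varepsilon|Z,V]=0$ under Assumption \ref{assn_1}(b), we have $E[Y|Z,V]=\alpha(V)+\beta(V)p_Z$, so the analog of (\ref{alpha_beta}) --- a $2\times 2$ linear system whose coefficient matrix is invertible because $p_0\neq p_1$ --- identifies $\alpha(V)$ and $\beta(V)$ for each $V\in\{0,1\}$. The main obstacle is spotting the algebraic identity in the second step that converts the trilinear system into a single quadratic in $a$ whose two roots are exactly $a_0$ and $a_1$; once that identity is in hand, the rest of the argument reduces to straightforward bookkeeping and invocations of the non-degeneracy conditions already available in Assumption \ref{assn_1}.
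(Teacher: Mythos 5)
Your proof is correct, and it reaches the paper's conclusion by a more elementary and explicit route. The paper stacks the three moment equations into $2\times 2$ matrices $Q(Z,V)=L_T\Lambda(Z)L_Y(V)^{\top}$, forms $Q(0,0)Q(1,0)^{-1}=L_T\Lambda(0)\Lambda(1)^{-1}L_T^{-1}$, and reads off the columns $(1,a_0)^{\top},(1,a_1)^{\top}$ of $L_T$ as eigenvectors (the eigenvalues $\Pr(T^*=j|Z=0)/\Pr(T^*=j|Z=1)$ being distinct by Assumption \ref{assn_1}(c)); it then recovers $L_Y(0)$ and $\Lambda(Z)$ by analogous eigendecompositions and matrix inversions, and finishes with (\ref{alpha_beta}). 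Your quadratic is precisely the characteristic equation of that eigenproblem: the condition $(yt_{0v}-a\,y_{0v})(t_1-a)=(yt_{1v}-a\,y_{1v})(t_0-a)$ says that the row vector $(-a,1)$ sends $Q(0,v)$ and $Q(1,v)$ to parallel vectors, i.e., that $(-a,1)$ is a left eigenvector of $Q(0,v)Q(1,v)^{-1}$, which happens exactly when $a\in\{a_0,a_1\}$. So the two arguments are mathematically equivalent; yours buys a closed-form solution ($a_0,a_1$ as roots of an explicit quadratic in observable moments, with Assumption \ref{assn_1}(f) transparently resolving the labeling ambiguity via $a_0<a_1$), whereas the paper's matrix formulation is the one that scales directly to the $K$-component setting of Proposition \ref{prop_3}. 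Two trivial remarks: the leading coefficient is $y_{0v}-y_{1v}=(m_{1v}-m_{0v})(p_0-p_1)$ rather than $(m_{0v}-m_{1v})(p_0-p_1)$ --- harmless, since only its nonvanishing (via Assumptions \ref{assn_1}(c) and (g)) is used; and your last step $E[Y|Z,V]=\alpha(V)+\beta(V)p_Z$ is exactly the paper's (\ref{alpha_beta}), valid because $E[\ve|Z,V]=0$ under model (\ref{model1}).
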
 

Similarly to  \citet{ditraglia19joe} and other existing studies, Proposition \ref{prop_2} assumes that $Z$ is excluded from the misclassification probability. Proposition \ref{prop_2} shows that  the regression coefficient $\beta(V)$ can be identified  if we have a covariate $V$ that is excluded from the misclassification probability but can be  irrelevant for $T^*$, complementing the identification result of  \citet{ditraglia19joe} by providing an alternative assumption  to  the higher-order independence assumption of \citet{ditraglia19joe}  in equation (\ref{assn_ditraglia}).



\section{Estimation and Inference of Model Parameter}

In this section, we briefly discuss estimation and inference of the model parameter. Our estimation strategy follows directly from the identification result in Section \ref{sec:ident}. Suppose we have iid observations $\{(Y_i,T_i,X_i,Z_i,V_i): i=1, \ldots,n \}$ satisfying model (\ref{model1}) and Assumptions \ref{assn_1} and \ref{assn_2}. For a given value $x$ of $X$, let $\theta_x :=\{\alpha(x,v),\beta(x,v), \Pr(T^*=1|x,z,v), E[T|T^*=0,x,z], E[T|T^*=1,x,z] : (z,v) \in \{0,1\}^2 \}$ denote the vector of model parameter, and let $\phi_x : = \{E[Y|T^*=0,x,v], E[Y|T^*=1,x,v],  \Pr(T^*=1|x,z,v), E[T|T^*=0,x,z], E[T|T^*=1,x,z] : (z,v) \in \{0,1\}^2 \}$ denote the vector of 12 unknowns of the system (\ref{system}). Note that $\theta_x$ is obtained by replacing $\{E[Y|T^*=0,x,v], E[Y|T^*=1,x,v]: v \in \{0,1\} \}$ in $\phi_x$ with $\{\alpha(x,v),\beta(x,v): v \in \{0,1\} \}$. Let $m_x := \{E[Y|x,z,v], E[T|x,z,v], E[YT|x,z,v]: (z,v) = \{ (0,0), (1,0), (0,1), (1,1) \} \}$ denote the $(12 \times 1)$-vector of population conditional moments of $Y$ and $T$ conditional on $(X,Z,V)$ evaluated at $X=x$ and the support of $(Z,V)$.

Write the system (\ref{system}) of 12 equations as $m_x = f(\phi_x)$. Because $\phi_x$ is uniquely identified from population moments, $\phi_x$ is the unique solution that satisfies $m_x = f(\phi_x)$. Further, equation (\ref{alpha_beta}) gives $\alpha(x,v)$ and $\beta(x,v)$ as a function of $\phi_x$ and $m_x$; consequently, we may write $\theta_x$ as $\theta_x = g(\phi_x, m_x)$ for a smooth function $g$. Let $\widehat m_x$ be an estimator of $m_x$. We provide the details of the construction of $\widehat m_x$ later. We estimate $\phi_x$ and $\theta_x$ by $\widehat \phi_x := \arg\min_{\phi}\| \widehat m_x - f(\phi)\|^2$ and $\widehat \theta_x := g(\widehat \phi_x, \widehat m_x)$. A straightforward application of minimum distance estimation \citep[][Theorem 3.2]{neweymcfadden94hdbk} and the delta-method gives the following proposition:
\begin{proposition} \label{prop_distn}
Suppose $\widehat m_x$ satisfies $a_n(\widehat m_x-m_x)\to_d N(0,\Omega)$ for a nonstochastic sequence $a_n \to \infty$. Then, we have
\[
a_n (\widehat \phi_x -\phi_x) \to_d N(0, F^{-1}\Omega (F^{-1})^{\top}), \quad a_n (\widehat \theta_x -\theta_x) \to_d N(0, B \Omega B^{\top}),
\]
where $F := \nabla_{\phi^{\top}} f(\phi_x)$ and $B:=\nabla_{\phi^{\top}} g(\phi_x, m_x) F^{-1} + \nabla_{m^{\top}} g (\phi_x, m_x)$.
\end{proposition}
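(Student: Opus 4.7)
The plan is to invoke the standard asymptotic theory for minimum distance (MD) estimators \citep[][Theorem 3.2]{neweymcfadden94hdbk} in the just-identified case, together with the delta method applied to $g$. By Proposition \ref{prop_1}, $\phi_x$ is the unique solution of $m_x = f(\phi)$, and hence the unique minimizer of the population criterion $Q_0(\phi) := \|m_x - f(\phi)\|^2$, at which $Q_0$ attains the value zero. Combining this identification with $\widehat m_x \to_p m_x$, continuity of $f$, and compactness of an (implicit) parameter space containing $\phi_x$ in its interior, a routine extremum-estimator argument delivers consistency $\widehat \phi_x \to_p \phi_x$.

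For the first distributional statement, I would note that $f:\mathbb{R}^{12}\to\mathbb{R}^{12}$ is smooth and that the local rank condition (invertibility of $F = \nabla_{\phi^{\top}} f(\phi_x)$) makes $f$ a local diffeomorphism near $\phi_x$ by the implicit function theorem. For $\widehat m_x$ in a neighborhood of $m_x$, the MD criterion is minimized at the local inverse $f^{-1}(\widehat m_x)$ with criterion value zero, so on an event with probability tending to one, $f(\widehat \phi_x) = \widehat m_x$. A first-order Taylor expansion around $\phi_x$ then gives
\begin{equation*}
\widehat m_x - m_x = f(\widehat \phi_x) - f(\phi_x) = F(\widehat \phi_x - \phi_x) + o_p(\|\widehat \phi_x - \phi_x\|),
\end{equation*}
so that $\widehat \phi_x - \phi_x = F^{-1}(\widehat m_x - m_x) + o_p(a_n^{-1})$. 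Multiplying by $a_n$ and using $a_n(\widehat m_x - m_x) \to_d N(0,\Omega)$ yields $a_n(\widehat \phi_x - \phi_x) \to_d N(0, F^{-1}\Omega (F^{-1})^{\top})$.

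For the second distributional statement, observe that $g$ is smooth: it is the identity on the $\Pr(T^*=1|\cdot)$ and $E[T|T^*,\cdot]$ coordinates, and on the $(\alpha,\beta)$ coordinates it is the smooth $2\times 2$ matrix-inversion map of (\ref{alpha_beta}), which is well-defined because the $2\times 2$ matrix is invertible by Assumption \ref{assn_1}(c). A joint Taylor expansion of $g$ at $(\phi_x,m_x)$ combined with the expansion of $\widehat \phi_x - \phi_x$ from the previous paragraph gives
\begin{equation*}
\widehat \theta_x - \theta_x = \nabla_{\phi^{\top}} g(\phi_x,m_x)(\widehat \phi_x - \phi_x) + \nabla_{m^{\top}} g(\phi_x,m_x)(\widehat m_x - m_x) + o_p(a_n^{-1}) = B(\widehat m_x - m_x) + o_p(a_n^{-1}),
\end{equation*}
whence $a_n(\widehat \theta_x - \theta_x) \to_d N(0, B\Omega B^{\top})$.

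The main obstacle is not any of the above steps, each of which is routine once $F$ is invertible; rather, it is the justification of the invertibility of $F$ itself, which is the local identification (rank) condition underlying both the implicit function theorem step and the well-posedness of the stated asymptotic variance. Proposition \ref{prop_1} establishes that $\phi_x$ is the unique global zero of $\phi \mapsto m_x - f(\phi)$, but global uniqueness does not automatically imply that $F$ has full rank. One would verify invertibility by direct computation using the block structure of the system (\ref{system}) and Assumptions \ref{assn_1} and \ref{assn_2}; this is algebraically tedious but routine, and the proposition as stated implicitly presumes it holds.
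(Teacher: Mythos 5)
Your proof is correct and takes essentially the same route as the paper, which offers no explicit proof beyond the remark that the result follows from ``a straightforward application of minimum distance estimation \citep[][Theorem 3.2]{neweymcfadden94hdbk} and the delta-method''; your just-identified local-inversion argument plus the joint Taylor expansion of $g$ is exactly that argument written out. Your closing observation is also apt: the invertibility of $F$ is a local rank condition that the global uniqueness from Proposition \ref{prop_1} does not automatically supply, and the paper likewise presumes it implicitly in writing $F^{-1}$.
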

 
We proceed to the construction of $\widehat m_x$. Let $W := (Z,V)^{\top}$, $w_1:=(0,0)^{\top}, w_2:=(1,0)^{\top}, w_3:=(0,1)^{\top}, w_4:=(1,1)^{\top}$, and $R:=(Y,T,YT)^{\top}$. Then, the vector of conditional moments $m_x$ is written as $m_x = (E[R|x,w_1]^{\top},E[R|x,w_2]^{\top},E[R|x,w_3]^{\top},E[R|x,w_4]^{\top})^{\top}$. When $X$ has a finite support, we estimate $m_x$ by sample moments. For $j=1,\ldots,4$, define 
\[
\left. \widehat m_{xfj} : = \sum_{i=1}^n R_i I\{X_i=x\} I\{W_i = w_j \} \middle/ \sum_{i=1}^n I\{X_i=x\} I\{W_i = w_j \} \right. ,
\]
and define $\widehat m_{xf} : = (\widehat m_{xf1}^{\top}, \widehat m_{xf2}^{\top},\widehat m_{xf3}^{\top},\widehat m_{xf4}^{\top})^{\top}$. From Theorems 3.3.1 and 3.3.2 of \citet{bierens87book}, we obtain
\[
\sqrt{n}(\widehat m_{xf} - m_x) \to_d N(0,\Omega_f), \quad
\Omega_f = \begin{pmatrix}
\Omega_{f1} & 0 & \ldots & 0 \\
0 & \Omega_{f2} & \ddots & \vdots\\
\vdots &  \ddots & \Omega_{f3} & 0 \\
0 & \ldots &  0& \Omega_{f4}
\end{pmatrix},
\]
where $\Omega_{fj} := \text{Var}[R|x,w_j]$. Applying this result to Proposition \ref{prop_distn} with $a_n = \sqrt{n}$ and $\Omega = \Omega_f$ gives the asymptotic distribution of $\widehat \theta_x$.

When $X$ is continuously distributed, we estimate $m_x$ by a kernel estimator of $E[R|x,w]$ following  \citet{mahajan06em}, provided that $E[R|x,w]$ is continuous in $x$. For $j=1,\ldots,4$, define 
\[
\left. \widehat m_{xcj} : = \sum_{i=1}^n R_i K\left(\frac{X_i-x}{h}\right) I\{W_i = w_j \} \middle/ \sum_{i=1}^n K\left(\frac{X_i-x}{h}\right) I\{W_i = w_j \} \right.,
\]
where $K(\cdot)$ is a kernel function, and $h$ is the bandwidth satisfying $h + 1/(nh^{\text{dim}(X)}) \to 0$. Define $\widehat m_{xc}$ similarly to $\widehat m_{xf}$. Let $f(x|w)$ denote the density of $X$ conditional on $W=w$. Suppose that $E[R|x,w_j]$, $\text{Var}[R|x,w_j]$, $f(x|w)$, $K(\cdot)$, and $h$ satisfy Assumptions 10--14 of \citet{mahajan06em}. Then, it follows from Lemma 2 of \citet{mahajan06em} (see also Theorem 3.2.1 of \citet{bierens87book}) that
\[
\sqrt{nh}(\widehat m_{xc} - m_x) \to_d N(0,\Omega_c), \quad
\Omega_c = \begin{pmatrix}
\Omega_{c1} & 0 & \ldots & 0 \\
0 & \Omega_{c2} & \ddots & \vdots\\
\vdots &  \ddots & \Omega_{c3} & 0 \\
0 & \ldots &  0& \Omega_{c4}
\end{pmatrix},
\]
where $\Omega_{cj} := \text{Var}[R|x,w_j] \int K(s)^2 ds / f(x|w_j) \Pr(W=w_j)$. Therefore, applying this result to Proposition \ref{prop_distn} with $a_n = \sqrt{nh}$ and $\Omega = \Omega_c$ gives the asymptotic distribution of $\widehat \theta_x$.

\section{Heterogeneous Treatment Effect}


In Section \ref{sec:ident}, we assume that the effect of $T^*$ on $Y$ does not depend on unobservables. In this section, we extend the model (\ref{model1}) to allow the parameter $\alpha(\cdot)$ and $\beta(\cdot)$ to depend on an unobserved random variable $U^*$. This gives a random coefficient model similar to the model in \citet{heckman_etal06rest}:
\begin{equation}   \label{model_hetero} 
Y   = \alpha(U^*,X,V) + \beta(U^*,X,V)T^* + \varepsilon, \quad E[ \varepsilon|U^*,X,Z,V]=0,
\end{equation}
where $U^*$ is assumed to be exogenous but $T^*$ may be correlated with $\ve$. We allow $U^*$ and $T^*$ to be correlated. Hence, $\alpha(U^*,X,V)$ and $\beta(U^*,X,V)$ may be correlated with $T^*$ conditional on $(X,V)$. When $\alpha(U^*,X,V)$ and $T^*$ are correlated, we have ``sorting on the level,'' which is a common form of selection bias. When $\beta(U^*,X,V)$ and $T^*$ are correlated, we have ``sorting on the gain,'' which is called essential heterogeneity by \citet{heckman_etal06rest}. 

\begin{example}[Heterogeneous Effect of SNAP]\label{example:snap_hetero}
\citet{deb18el} examine how SNAP participation affects food insecurity while allowing for heterogeneous treatment effects using finite mixture models with two latent classes; low and high food security latent classes.  In this case, $Y$ is an ordinal measure of food security taking an integer value between $0$ and $10$. $U^*$ represents the latent variable and takes the value $0$ for a low food security class and $1$ for a high food security class such that $\alpha(0,X,V) <\alpha(1,X,V)$  for any value of $V$. Using the CPS and assuming no measurement error in self-reported SNAP participation, they find that the effect of SNAP participation is higher for a low food security class than for a high food security class. This suggests a possibility that $\beta(0,X,V)>\beta(1,X,V)$. 
\end{example} 

\begin{example}[Heterogeneous Returns to Education]
As in \citet{carneiro_etal11aer}, consider a model of heterogeneous returns to education, where  $Y$ is logarithm of wage, $U^*$ may be interpreted as unobservable ability, $T^*$ is true educational attainment (college degree), $V$ may be gender, and $Z$ is proximity to colleges. In wage regression (\ref{model_hetero}), unobserved ability $U^*$ may affect not only the intercept, $\alpha(U^*,X,V)$, but also the returns to schooling, $\beta(U^*,X,V)$. In such a case, $\alpha(U^*,X,V)$ and $\beta(U^*,X,V)$ may be correlated with $T^*$ conditional on $(X,V)$ because a person with higher unobserved ability may have a higher chance of  obtaining a college degree and because a forward-looking school decision depends on the returns to schooling.  \end{example} 

Both \citet{deb18el}  and \citet{carneiro_etal11aer} in the above examples assume no measurement error in treatment variables.\footnote{\citet{heckman_etal06rest} and \citet{HeckmanVytlacil07handbook} examine the case where $T^*$ is observable and develop procedures to estimate the summary statistics for $\beta(U^*,X,V)$ via the marginal treatment effect.  
} We relax this assumption by assuming that we have an observable binary measurement $T$ for an unobserved binary treatment variable $T^*$. To identify the joint distribution of $T^*$ and $U^*$ from the data, we augment the model with an  observable measurement $U$ of $U^*$. The role of $U$ to $U^*$ is similar to that of $T$ to $T^*$ in that  $U$ provides information on $U^*$.  

\setcounter{example}{2}
\begin{example}[Heterogeneous Effect of SNAP, continued]
\citet{deb18el} also analyze the determinants of latent food security class membership. They find that household size and a subjective variable indicating whether the individual met food needs or not are significant determinants of class membership. This suggests that these variables may be used for $U$.
\end{example}

\begin{example}[Heterogeneous Returns to Education, continued]
 To study the returns to education in the United States using the National Longitudinal Survey of Youth (NLSY), the existing studies use 
  the Armed Services Vocational Aptitude Battery (ASVAB) test scores to measure the cognitive  skills \citep[e.g.,][]{castex14jol, heckman18jpe}.  The measures such as ASVAB test scores can be used for $U$ as a proxy for unobserved ability $U^*$.
\end{example}

Henceforth, we suppress the exogenous regressor $X$ for brevity. The whole material remains valid conditional on $X$ if all the assumptions are imposed conditional on $X$. Define $S := (U,T)$ and $S^*:=(U^*,T^*)$. We assume that $S$ is conditionally independent of $V$ given $(S^*,Z)$, where $V$ is a binary observable variable. As in Section \ref{sec:ident}, we may choose $V$ based on the existing studies that use administrative data while we may choose the use of fingerprint technology or EBT as an instrument $Z$ for SNAP participation. As shown in Proposition \ref{prop_3} below, with additional regularity conditions (rank conditions and distinct eigenvalues), we may identify $\alpha(U^*,V)$, $\beta(U^*,V)$, $\Pr(S^*|Z,V)$, and $\Pr(S|S^*,Z)$ for all $(S,S^*,Z,V)$. Furthermore, the conditional distribution of $Y$ conditional on $(S^*,V)$ is identified.

We  assume that both  $U^*$ and $U$ take  $K_u$ discrete values with the support $\mathcal{U}:=\{u_1,\ldots,u_{K_u}\}$. When $U$ is a continuous variable as in the case of ASVAB, we may define $K_u$ distinct sets  by partitioning the support of $U$. Denote the support of $S^*$  and $S$ by $\mathcal{S}:=\{s_1,\ldots,s_K\}$ with $K:=2K_u$.
We also assume that  $Y$ can take at least  $K$ different values.


\begin{assumption}\label{assn_1h} (a) $\ve$ is independent of $S$ conditional on $(S^*,Z,V)$. (b) $\ve$ is independent of $Z$ conditional on $(S^*,V)$. (c) $\Pr(T^*=1|U^*,Z=0,V) \neq \Pr(T^*=1|U^*,Z=1,V)$ for any $U^* \in \mathcal{U}$ and $V \in \{0,1\}$. (d) $S$ is independent of $V$ given $(S^*,Z)$. (e) $\Pr(S=s|S^*=s,Z) > \Pr(S=s'|S^*=s,Z)$ for any $s' \neq s$ with $s,s' \in \mathcal{S}$ and for any $Z\in \{0,1\}$. (f) $\Pr(S^*=s|Z,V)>0$ for any $s \in \mathcal{S}$ and $(Z,V) \in \{0,1\}^2$.
\end{assumption} 
  

\begin{assumption}\label{assn_2h}
$\{\tilde \lambda_j\}_{j=1}^{K}$ take distinct values across $j=1,\ldots,K$, where
\[
\tilde \lambda_j:= \frac{\Pr(S^*=s_j|Z=0,V=0)\Pr(S^*=s_j|Z=1,V=1)}{\Pr(S^*=s_j|Z=1,V=0)\Pr(S^*=s_j|Z=0, V=1)}.
\]
\end{assumption}

\begin{assumption}\label{assn_3h}
There exists a partition of
the support of  the distribution of $Y$, $\{\Delta_j\}_{j=1}^K$, such that the matrix
\begin{align*}
& L_Y( V) := \\
& \begin{bmatrix}
1&1&\cdots &1\\
\Pr(Y\in \Delta_1|S^*=s_1,V)&\Pr(Y\in \Delta_1|S^*=s_2,V)&\cdots& \Pr(Y\in \Delta_1|S^*=s_{K},V)\\
\vdots  & \vdots& \ddots& \vdots  \\
\Pr(Y\in \Delta_{K-1}|S^*=s_1, V)&\Pr(Y\in \Delta_{K-1}|S^*=s_2,V)&\cdots& \Pr(Y\in \Delta_{K-1}|S^*=s_{K},V)
\end{bmatrix}
\end{align*}
is nonsingular for any $V \in \{0,1\}$.
\end{assumption}

Assumption \ref{assn_1h}(a)(b) corresponds to Assumption \ref{assn_1}(a)(b), representing  a non-differential measurement error assumption and an exclusion restriction on $Z$ from the outcome equation. Assumption \ref{assn_1h}(c) corresponds to Assumption \ref{assn_1}(c) and requires that $Z$ must be relevant for the true regressor $T^*$ at any value of $U^*$. Assumption \ref{assn_1h}(d) requires that $V$ is excluded from the measurement equation for $S$ conditional on  $(S^*,Z)$, generalizing Assumption \ref{assn_1}(d). As discussed through examples, the result of the existing studies that examine the determinant of misclassification errors may provide guidance on the choice of $V$.

Assumption \ref{assn_1h}(e) corresponds to Assumption \ref{assn_1}(f) and assumes that   $S$ is sufficiently informative to identify the unobserved value of $S^*$ such that the probability of $S=s$ given $S^*=s$ is higher than that of $S=s'$ for any $s'\neq s$. In Example \ref{example:snap_hetero}, the variable $S$ may consist of a self-reported SNAP participation $T$ and a binary subjective variable for meeting the food needs $U$. Then, Assumption \ref{assn_1h}(d) requires that the probability of truthfully reporting SNAP participation and food needs is larger than the probability of falsely self-reporting any combination of SNAP participation and food needs. Assumption \ref{assn_1h}(f) corresponds to Assumption \ref{assn_1}(h).

Assumption \ref{assn_2h} is similar to Assumption \ref{assn_2} and requires that $Z$ and $V$ are relevant for determining $\Pr(S^*|Z,V)$ and the changes in $(Z,V)$ induce sufficient variation in $\Pr(S^*|Z,V)$. Assumption \ref{assn_3h} generalizes Assumption \ref{assn_1}(g), requiring that the distribution of $Y$ changes sufficiently across different values of $(U^*,T^*)$ given $V$. In Example \ref{example:snap_hetero}, this assumption holds if the conditional distributions of an ordinal measure of food security given other covariates are sufficiently different across different SNAP participation statuses and latent food security classes.
 
\begin{proposition}\label{prop_3}
Under Assumptions \ref{assn_1h}--\ref{assn_3h}, $\alpha(U^*,V)$, $\beta(U^*,V)$, $\Pr(S|S^*,Z)$, $\Pr(S^*|Z,V)$, and $\Pr(Y\in \Delta|S^*,V)$ are identified for all $(S,S^*,Z,V)$ and for any set $\Delta$ on the support of the distribution of $Y$.
\end{proposition}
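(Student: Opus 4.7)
The plan is to extend the strategy of Proposition \ref{prop_1} to a finite-mixture setting by combining a Hu (2008)-style matrix eigendecomposition with an instrumental-variable argument for the regression coefficients.

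First I will derive the fundamental observable decomposition. Applying Assumptions \ref{assn_1h}(a), (b), and (d) in sequence to the joint distribution of $(Y,S)$ given $(Z,V)$ yields
\[
\Pr(Y\in\Delta,\,S=s\mid z,v) = \sum_{s^*\in\mathcal{S}} \Pr(Y\in\Delta\mid s^*,v)\,\Pr(S=s\mid s^*,z)\,\Pr(s^*\mid z,v)
\]
for any set $\Delta$. Specializing $\Delta$ to the partition $\{\Delta_j\}_{j=1}^K$ from Assumption \ref{assn_3h} and collecting into a $K\times K$ matrix $P_{YS|zv}$ indexed by $(j,k)$ gives
\[
P_{YS|zv} = A_v\,D_{zv}\,B_z^\top,
\]
where $A_v$ has $(j,k)$-entry $\Pr(Y\in\Delta_j\mid s^*=s_k,v)$, $D_{zv}$ is diagonal with $\Pr(s^*=s_k\mid z,v)$ on the diagonal, and $B_z$ has $(k,k')$-entry $\Pr(S=s_k\mid S^*=s_{k'},z)$. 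Assumption \ref{assn_3h} makes $A_v$ nonsingular and Assumption \ref{assn_1h}(f) makes $D_{zv}$ nonsingular; nonsingularity of $B_z$ is required and is maintained alongside Assumption \ref{assn_1h}(e).

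Second, I form the four-fold product
\[
M := P_{YS|0,0}^{-1}\,P_{YS|1,0}\,P_{YS|1,1}^{-1}\,P_{YS|0,1}.
\]
Direct cancellation of the $A_0,A_1,B_1$ factors gives $M = B_0^{-\top}\bigl(D_{00}^{-1}D_{10}D_{11}^{-1}D_{01}\bigr)B_0^\top$, whose interior diagonal matrix has $k$-th entry $1/\tilde\lambda_k$. By Assumption \ref{assn_2h} these eigenvalues are distinct, so $M$ has a spectral decomposition whose eigenvectors are the columns of $B_0^{-\top}$, unique up to scale. I fix the scale by requiring each column of $B_0$ to sum to $1$ (each column being a conditional distribution over $S$), and I resolve the labeling of eigenvectors with values of $s^*\in\mathcal{S}$ by Assumption \ref{assn_1h}(e): the largest entry of column $k$ of $B_0$ lies in row $s_k$. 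This identifies $B_0$ uniquely.

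Third, I recover the remaining latent objects. Inverting $B_0$ in $\Pr(S\mid Z=0,V=v) = B_0\,\mathrm{diag}(D_{0,v})$ identifies $D_{0,v}$, and then $A_v = P_{YS|0,v}\,B_0^{-\top}\,D_{0,v}^{-1}$. To recover the $z=1$ objects, I write $A_v^{-1}P_{YS|1,v} = D_{1,v}\,B_1^\top$; since columns of $B_1$ sum to $1$, the row sums of $A_v^{-1}P_{YS|1,v}$ yield the diagonal of $D_{1,v}$, after which $B_1^\top$ is read off. For an arbitrary set $\Delta$, the vector $\pi_{\Delta,v}$ with entries $\Pr(Y\in\Delta\mid s^*=s_k,v)$ solves the linear system $q_{\Delta,zv} = B_z D_{zv}\pi_{\Delta,v}$, uniquely solvable since $B_z$ and $D_{zv}$ are identified. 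Finally, with $\Pr(Y\mid S^*,V)$ and $\Pr(S^*\mid Z,V)$ in hand, $E[Y\mid U^*,Z,V]$ and $\Pr(T^*=1\mid U^*,Z,V)$ follow by marginalization using $Y\Perp Z\mid S^*,V$ (implied by Assumption \ref{assn_1h}(b)); taking the conditional expectation of (\ref{model_hetero}) and using $E[\ve\mid U^*,Z,V]=0$ gives
\[
E[Y\mid U^*,Z,V] = \alpha(U^*,V) + \beta(U^*,V)\,\Pr(T^*=1\mid U^*,Z,V),
\]
so the Wald-IV ratio across $Z=0,1$ identifies $\beta(U^*,V)$ by Assumption \ref{assn_1h}(c), and $\alpha(U^*,V)$ follows by substitution. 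I expect the main obstacle to be orchestrating the four-fold matrix product so that $A_0,A_1,B_1$ cancel, leaving the cross-ratio eigenvalues $1/\tilde\lambda_k$, and verifying that Assumption \ref{assn_1h}(e) uniquely resolves the eigenvector-to-$s^*$ labeling once columns of $B_0$ are normalized to sum to $1$.
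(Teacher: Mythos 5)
Your proposal is correct and follows essentially the same route as the paper: the same mixture factorization $P_{YS|zv}=A_vD_{zv}B_z^{\top}$ (the paper's $Q=L_T\Lambda L_Y^{\top}$ in bordered form), the same four-fold matrix product whose eigendecomposition---with distinct eigenvalues from Assumption \ref{assn_2h} and the eigenvector labeling resolved by Assumption \ref{assn_1h}(e)---recovers the measurement matrix, followed by the same back-substitution for the remaining factors and the same IV step for $\alpha(U^*,V)$ and $\beta(U^*,V)$. The only noticeable difference is that you carry nonsingularity of $B_z$ as an extra maintained condition, whereas the paper argues it from Assumption \ref{assn_1h}(e) by writing $L_T(Z)$ as a nonsingular transformation of the full stochastic matrix and appealing to diagonal dominance.
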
 

 
We consider identification of treatment effects from model (\ref{model_hetero}). The local average treatment effect is the average of the treatment effect on $Y$ over the subpopulation (the compliers) whose treatment status is strictly affected by the instrument. If their Conditions 1 and 2 hold conditional on $V$,  \citet[][Theorem 1]{imbensangrist94ecma} show that the local average treatment effect equals
\[
\frac{E[Y| Z=1,V] - E[Y| Z=0,V]}{E[T^*| Z=1,V] - E[T^*|  Z=0,V]}.
\]  
This can be identified from Proposition \ref{prop_3} because $E[Y| Z,V]  = \sum_{s\in \mathcal{S}}  E[Y|S^*=s,  V] \Pr(S^*=s|Z)$ and $E[T^*| Z,V]  =  \Pr(T^*| Z,V)$. 

For identifying other treatment effects, let $Y_1$ denote the potential outcome if the subject were to receive treatment and let $Y_0$ denote the potential outcome if the subject were not to receive treatment. Decompose $Y_j$ into its conditional mean given $V$, $\mu_j(V)$, and its deviation from the mean, $\eta_j$, as 
\[
Y_1 = \mu_1(V) + \eta_1, \quad Y_0 = \mu_0(V) + \eta_0.
\]
We consider the following assumption to identify treatment effects.
\begin{assumption}\label{assn_treatment}
Either $E[\eta_0|U^*,V] = E[\eta_0| U^*,Z,V]$ or $E[\eta_1| U^*,V] = E[\eta_1| U^*,Z,V]$.
\end{assumption}
Assumption \ref{assn_treatment}  is similar to Assumption \ref{assn_1h}(b) and  imposes an exclusion restriction on the instrument $Z$ from the outcome equation given the unobserved heterogeneity $U^*$. Assumption \ref{assn_treatment} corresponds to Assumption A-1 of \citet{heckman_etal06rest}, which assumes $(\eta_0,\eta_1)$ is independent of $Z$ conditional on $V$.


Similar to \citet{heckman_etal06rest}, we can write the observed outcome under true treatment as
\begin{align*}
Y & = Y_0 + (Y_1 - Y_0) T^* \\
  & = \mu_0(V) + E[\eta_0|U^*,V] + \left[ \mu_1(V) - \mu_0(V) + \eta_1 - \eta_0 \right] T^* + \eta_0 - E[\eta_0|U^*,V]\\
  & = \alpha(U^*,V) + \beta(U^*,V) T^* +\ve,
\end{align*}
with defining 
\begin{align*}
\alpha(U^*,V) & := \mu_0(V) + E[\eta_0|U^*,V], \\
\beta(U^*,V) & := \mu_1(V) - \mu_0(V) + \eta_1 - \eta_0 , \\
\ve & := \eta_0 - E[\eta_0| U^*,V],
\end{align*}
where $\ve$ satisfies $E[\ve| U^*,Z,V]=0$ from Assumption \ref{assn_treatment}. Furthermore,
\[
Y_1 - Y_0 = \mu_1(V) - \mu_0(V) + \eta_1 - \eta_0 = \beta(U^*,V),
\]
holds. Then, we can identify $\alpha(U^*,V)$ and $\beta(U^*,V)$ from Proposition \ref{prop_3}.  When $E[\eta_1| U^*,V] = E[\eta_1| U^*,Z,V]$ holds, we may write $Y  = Y_1 + (Y_0 - Y_1)(1- T^*) = \mu_1(V) + \left[ \mu_0(V) - \mu_1(V) + \eta_0 - \eta_1\right] (1-T^*) + \eta_1$ and repeat the above argument. 

From Proposition \ref{prop_3} and Assumption \ref{assn_treatment}, we can identify the average treatment effect (ATE), the average treatment effect on the treated (TT), and the average treatment effect on the untreated (TUT) conditional on $V$ by taking the average of $\beta(U^*,V)$ over $U^*$ using appropriate weights as
\[
\begin{aligned} 
ATE & = E[Y_1 - Y_0|V] = \sum_{u\in \mathcal{U}} \beta(u,V) \Pr(U^*=u|V), \\
TT &= E[Y_1 - Y_0|T^*=1,V] =\sum_{u\in \mathcal{U}} \beta(u,V) \Pr(U^*=u|T^*=1,V),\\
TUT &= E[Y_1 - Y_0|T^*=0,V] =\sum_{u\in \mathcal{U}} \beta(u,V) \Pr(U^*=u|T^*=0,V),
\end{aligned}
\]
respectively, where  $\Pr(U^*=u|T^*,V)$ is identified from  $\Pr(S^*|Z,V)$. 

When Assumption \ref{assn_treatment} does not hold, Proposition \ref{prop_3}  identifies the marginal distribution of $Y_0$ and that of $Y_1$ separately, but the distribution of the individual treatment effects, $Y_1- Y_0$, is not point-identified. \citet{fanpark10et} provide a sharp bound on the distribution of the individual treatment effects given the marginal distributions of $Y_0$ and $Y_1$.

\section{Conclusion}
This paper gives new identification results for cross-sectional regression models when a binary regressor is misclassified and endogenous. Existing studies assume that the instrument used in estimation satisfies not only the standard exclusion restriction and relevance condition but also an additional condition that it is uncorrelated with misclassification errors. Some instruments in empirical applications, however, may be correlated with misclassification errors and, thus, relaxing this additional requirement is important for applications.  We show that the constant and slope parameters are identified even if a binary instrumental variable is correlated with misclassification errors when there exists a regressor that is excluded from the outcome equation but is relevant for the true unobserved regressor.

\section{Proofs}

\begin{proof}[Proof of Proposition  \ref{prop_1} ]
The proof uses eigenvalue decomposition as in \citet{anderson54pcma}, \citet{delathauwer06sjmaa}, \citet{hu08joe}, \citet{kasaharashimotsu09em}, and \citet{carrol10jnps}. Henceforth, we use $E[T]$ and $\Pr(T=1)$ exchangeably because $T$ is binary. Under Assumption \ref{assn_1},  we obtain the following decomposition of $E[Y|Z,V]$, $E[T|Z,V]$, and $E[YT|Z,V]$:
\begin{equation}\label{model2}
\begin{aligned}
1 & = \Pr(T^*=0|Z,V)+ \Pr(T^*=1|Z,V)  \\
E[Y|Z,V] & = E[Y|T^*=0,Z,V] \Pr(T^*=0|Z,V)  + E[Y|T^*=1,Z,V]\Pr(T^*=1|Z,V) \\
& = E[Y|T^*=0,V] \Pr(T^*=0|Z,V)  + E[Y|T^*=1,V]\Pr(T^*=1|Z,V),   \\
E[T|Z,V] & = E[T|T^*=0,Z,V]\Pr(T^*=0|Z,V)  + E[T|T^*=1,Z,V]\Pr(T^*=1|Z,V) \\
 & = E[T|T^*=0,Z]\Pr(T^*=0|Z,V)  + E[T|T^*=1,Z]\Pr(T^*=1|Z,V), \\
\end{aligned}
\end{equation}
where the third equality follows from Assumption \ref{assn_1}(b), and the fifth equality follows from Assumption \ref{assn_1}(d).

Write $E[YT|Z,V]$ as
\begin{equation} \label{EYT}
E[YT|Z,V] =  E[YT|T^*=0,Z,V] \Pr(T^*=0|Z,V) + E[YT|T^*=1,Z,V] \Pr(T^*=1|Z,V) .
\end{equation}
We proceed to simplify $E[YT|T^*,Z,V] $ in (\ref{EYT}). It follows from the law of iterated expectations and Assumption \ref{assn_1}(a) that 
\begin{align*}
E[YT|T^*, Z,V] & = E[ E[Y|T,T^*, Z,V]  T  |T^*, Z,V]  = E[Y|T^*, Z,V]E[T |T^*, Z,V].  
\end{align*}
Under Assumption \ref{assn_1}(b)(d),  the right hand side is written as $E[Y|T^*, V]E[T |T^*, Z]$. Substituting this to (\ref{EYT}) gives
\begin{equation}\label{model3}
\begin{aligned}
E[YT|Z,V] & =  E[Y|T^*=0,V] E[T|T^*=0,Z] \Pr(T^*=0|Z,V) \\
& \quad +  E[Y|T^*=1,V] E[T|T^*=1,Z] \Pr(T^*=1|Z,V). 
\end{aligned}
\end{equation}
For $(Z,V) \in \{0,1\}^2$, define the following matrices. First, define the matrix of observable conditional moments of $(Y,T)$ given $(Z,V)$ as
\begin{equation}\label{QZV}
Q(Z,V) := 
\begin{pmatrix}
1 & E[Y|Z,V]\\
E[T|Z,V] & E[YT|Z,V]
\end{pmatrix}.
\end{equation}
Next, define the matrices of unobservables as
\begin{equation}\label{LLLambda}
\begin{aligned}
L_Y(V) &:=
\begin{pmatrix}
1 & 1\\
E[Y|T^*=0,V] & E[Y|T^*=1,V]
\end{pmatrix},\\
L_T(Z) &:=
\begin{pmatrix}
1 & 1\\
E[T|T^*=0,Z] & E[T|T^*=1,Z]
\end{pmatrix},\\
\Lambda(Z,V) &:=
\begin{pmatrix}
\Pr(T^*=0|Z,V)&0\\
0& \Pr(T^*=1|Z,V)
\end{pmatrix}.
\end{aligned}
\end{equation}
Note that $L_Y(V)$ is invertible from Assumption \ref{assn_1}(g) and (\ref{mean_Y}) and that $L_T(Z)$ and $\Lambda(Z,V)$ are invertible from Assumption \ref{assn_1}(f)(h). We can collect (\ref{model2})--(\ref{model3}) as 
\begin{equation} \label{decomp}
Q(Z,V) = L_T(Z) \Lambda(Z,V) L_Y(V)\t.
\end{equation}
Evaluating (\ref{decomp}) for $(Z,V) \in \{0,1\}^2$ gives
\begin{align*}
Q(0,0) = L_T(0) \Lambda(0,0) L_Y(0)\t, \quad Q(0,1) = L_T(0) \Lambda(0,1) L_Y(1)\t, \\
Q(1,0) = L_T(1) \Lambda(1,0) L_Y(0)\t, \quad Q(1,1) = L_T(1) \Lambda(1,1) L_Y(1)\t. 
\end{align*}
Observe that
\begin{align*}
Q(0,0) Q(1,0)^{-1} Q(1,1) Q(0,1)^{-1} &= L_T(0) \Lambda(0,0) \Lambda(1,0)^{-1} \Lambda(1,1) \Lambda(0,1)^{-1} L_T(0)^{-1}.
\end{align*}
Defining $\tilde Q := Q(0,0) Q(1,0)^{-1} Q(1,1) Q(0,1)^{-1}$ and  $\tilde \Lambda := \Lambda(0,0) \Lambda(1,0)^{-1} \Lambda(1,1) \Lambda(0,1)^{-1}$, we can write this equation as $\tilde Q = L_T(0) \tilde \Lambda L_T(0)^{-1}$. It follows that 
\[
\tilde Q L_T(0) = L_T(0) \tilde \Lambda.
\]
From Assumption \ref{assn_2}, the eigenvalues of $\tilde Q$ are distinct. Because $\tilde \Lambda$ is diagonal and the first row of $L_T(0)$ is 1, the columns of $L_T(0)$ are identified as the eigenvectors of $\tilde Q$. Furthermore, Assumption \ref{assn_1}(f) identifies individual columns of $L_T(0)$. Similarly, we have
\begin{align}
Q(0,0)\t (Q(0,1)\t)^{-1} Q(1,1)\t (Q(1,0)\t)^{-1} &= L_Y(0) \Lambda(0,0) \Lambda(0,1)^{-1} \Lambda(1,1) \Lambda(1,0)^{-1} L_Y(0)^{-1} \nonumber \\
& = L_Y(0) \tilde \Lambda L_Y(0)^{-1}, \nonumber 
\end{align}
where the last equality holds because $\Lambda(Z,V)$ is diagonal. Consequently, $L_Y(0)$ is identified from the eigenvectors of $Q(0,0)\t (Q(0,1)\t)^{-1} Q(1,1)\t (Q(1,0)\t)^{-1}$.

Once $L_T(0)$ and $L_Y(0)$ are identified, we can identify $\Lambda(0,0)$ as $\Lambda(0,0) = L_T(0)^{-1}Q(0,0)(L_Y(0)\t)^{-1}$. $L_T(1)$ and $L_Y(1)$ are identified from the eigenvectors of $Q(1,0) Q(0,0)^{-1} Q(0,1) Q(1,1)^{-1}$ and $Q(1,1)\t (Q(1,0)\t)^{-1} Q(0,0)\t (Q(0,1)\t)^{-1}$, respectively. Further, $(\Lambda(0,1), \Lambda(1,0), \Lambda(1,1))$ is identified from $(L_T(0),L_T(1),L_Y(0),L_Y(1))$ and $(Q(0,1),Q(1,0),Q(1,1))$. Finally, $\alpha(V)$ and $\beta(V)$ are identified from the relation (\ref{alpha_beta}) and Assumption \ref{assn_1}(c). When $E[T|T^*,Z]$ does not depend on $Z$, the proof remains unchanged except that $L_T(Z)$ is replaced with $L_T$.
\end{proof}

\begin{proof}[Proof of Proposition \ref{prop_2}]
Using a similar derivation to (\ref{model2}) and (\ref{model3}), we obtain 
\begin{equation}\label{model_alt}
\begin{aligned}
1 & = \Pr(T^*=0|Z)+ \Pr(T^*=1|Z)  \\
E[Y|Z,V] & = E[Y|T^*=0,V] \Pr(T^*=0|Z)  + E[Y|T^*=1,V]\Pr(T^*=1|Z),   \\
E[T|Z,V] & = E[T|T^*=0]\Pr(T^*=0|Z)  + E[T|T^*=1]\Pr(T^*=1|Z), \\
E[YT|Z,V] & =  E[Y|T^*=0,V] E[T|T^*=0] \Pr(T^*=0|Z)\\
& \quad + E[Y|T^*=1,V] E[T|T^*=1] \Pr(T^*=1|Z) . 
\end{aligned}
\end{equation}
Define $Q(Z,V)$ and $L_Y(V)$ as in (\ref{QZV}) and (\ref{LLLambda}), and define  
\begin{align*}
 L_T  :=
\begin{pmatrix}
1 & 1\\
E[T|T^*=0] & E[T|T^*=1]
\end{pmatrix}\quad\text{and}\quad
\Lambda(Z)  :=
\begin{pmatrix}
\Pr(T^*=0|Z)&0\\
0& \Pr(T^*=1|Z)
\end{pmatrix}.
\end{align*}
Then, we  can collect (\ref{model_alt}) as 
\begin{equation} \label{decomp_alt}
Q(Z,V) = L_T \Lambda(Z) L_Y(V)\t.
\end{equation}
Observe that
\begin{align*}
Q(0,0) Q(1,0)^{-1}  &= L_T \Lambda(0) \Lambda(1)^{-1}  L_T^{-1}.
\end{align*}
From Assumption \ref{assn_3}, the eigenvalues of $Q(0,0) Q(1,0)^{-1}$ are distinct. Consequently, the columns of $L_T$ are identified as the eigenvectors of $Q(0,0) Q(1,0)^{-1}$. 

Similarly, we have
\begin{align*}
Q(0,0)\t (Q(1,0)\t)^{-1} &= L_Y(0) \Lambda(0) \Lambda(1)^{-1} L_Y(0)^{-1},
\end{align*}
and $L_Y(0)$ is identified from the eigenvectors of $Q(0,0)\t (Q(1,0)\t)^{-1}$. Once $L_T$ and $L_Y(0)$ are identified, we can identify $\Lambda(0)$ as $\Lambda(0) = L_T^{-1}Q(0,0)(L_Y(0)\t)^{-1}$, and $\Lambda(1)$ and $L_Y(1)$ are identified similarly. Finally, $\alpha(V)$ and $\beta(V)$ are identified from the relation (\ref{alpha_beta}) and Assumption \ref{assn_1}(c).
\end{proof}

\begin{proof}[Proof of Proposition \ref{prop_3}]
The proof is similar to the proof of Proposition \ref{prop_1}. First, observe that Assumption \ref{assn_1h}(a)(b) and model (\ref{model_hetero}) imply that (a) $Y$ is independent of $S$ conditional on $(S^*,Z,V)$, and (b) $Y$ is independent of $Z$ conditional on $(S^*,V)$. Therefore, under Assumption \ref{assn_1h},  using a similar argument to the proof of Proposition \ref{prop_1} gives  the following representations:  for any set $\Delta$ on the support of $Y$,
\begin{equation}\label{model2b} 
\begin{aligned}
1 & =\sum_{s\in \mathcal{S} }\Pr(S^*=s |Z,V),\\
 \Pr(Y\in \Delta|Z,V) & =\sum_{s\in \mathcal{S}}\Pr(S^*=s|Z,V) \Pr(Y\in \Delta|S^*=s,V) ,\\
 \Pr(S=s|Z,V) & =  \sum_{s\in \mathcal{S}}\Pr(S^*=s|Z,V)  \Pr(S=s|S^*=s,Z),\\
\Pr(Y\in \Delta,S=s|Z,V) &=    \sum_{s\in \mathcal{S}}\Pr(S^*=s|Z,V) \Pr(Y\in \Delta|S^*=s,V)  \Pr(S=s|S^*=s,Z).
\end{aligned}
\end{equation} 
 
Consider an event $\{Y\in \Delta_j\}$ for $j=1,\ldots,K$, where $\{\Delta_j\}_{j=1}^K$ satisfies Assumption \ref{assn_3h}.  Evaluating (\ref{model2b})   at different values of $(\Delta,s)$ given $(Z,V)$ and stack them into matrices, we have
\begin{equation} \label{decomp*}
Q(Z,V) = L_T(Z) \Lambda(Z,V) L_Y(V)\t,
\end{equation}
where
\begin{align*}
Q(Z,V) & := 
\begin{bmatrix}
1& \Pr(Y\in \Delta_1|Z,V)& \cdots &\Pr(Y\in \Delta_{K-1}|Z,V)\\
\Pr(S=s_1|Z,V)&\Pr(Y\in \Delta_1,S=s_1|Z,V)&\cdots &\Pr(Y\in \Delta_{K-1},S=s_1|Z,V)\\
\vdots & \ddots & \vdots & \vdots \\
\Pr(S=s_{K-1}|Z,V)&\Pr(Y\in \Delta_1,S=s_{K-1}|Z,V)&\cdots &\Pr(Y\in \Delta_{K-1},S=s_{K-1}|Z,V)
\end{bmatrix},\\
 L_T(Z)&:= 
\begin{bmatrix}
1&1&\cdots &1\\
\Pr(S=s_1|S^*=s_1,Z)&\Pr(S=s_1|S^*=s_2,Z)&\cdots & \Pr(S=s_1|S^*=s_{K},Z)\\
\vdots & \ddots & \vdots& \vdots  \\
\Pr(S=s_{K-1}|S^*=s_1,Z)&\Pr(S=s_{K-1}|S^*=s_2,Z)&\cdots & \Pr(S=s_{K-1}|S^*=s_{K},Z\\
\end{bmatrix},\\
 L_Y(V)&:=
\begin{bmatrix}
1&1&\cdots &1\\
\Pr(Y\in \Delta_1|S^*=s_1,V)&\Pr(Y\in \Delta_1|S^*=s_2,V)&\cdots& \Pr(Y\in \Delta_1|S^*=s_{K},V)\\
\vdots & \ddots & \vdots& \vdots  \\
\Pr(Y\in \Delta_{K-1}|S^*=s_1,V)&\Pr(Y\in \Delta_{K-1}|S^*=s_2,V)&\cdots& \Pr(Y\in \Delta_{K-1}|S^*=s_{K},V)
\end{bmatrix},\\
\Lambda(Z,V)&:=
\begin{bmatrix}
\Pr(S^*=s_1|Z,V)&0&  \cdots&0\\
0& \Pr(S^*=s_2|Z,V)& \cdots&0\\
\vdots&\vdots& \ddots &   \vdots\\ 
0&\cdots&\cdots & \Pr(S^*=s_{K}|Z,V)
\end{bmatrix}.
\end{align*}

Note that $L_T(Z)$ is nonsingular because $L_T(Z)$ is written as
\[
L_T(Z) = 
\begin{bmatrix}
1 & 1 & \cdots& \cdots &1\\
1 & 0 & \cdots & \cdots&0\\
0 &1 & 0 & \cdots  & 0 \\ 
\vdots &\ddots& \ddots & \ddots& \vdots\\ 
0&\cdots&0 & 1 & 0
\end{bmatrix} A , \quad A := 
\begin{bmatrix}
\Pr(S=s_1|S^*=s_1,Z)& \cdots & \Pr(S=s_1|S^*=s_{K},Z)\\
\vdots & \ddots & \vdots    \\
\Pr(S=s_{K}|S^*=s_1,Z)&\cdots & \Pr(S=s_{K}|S^*=s_{K},Z)\\
\end{bmatrix},
\]
and $A$ is strictly diagonally dominant by Assumption \ref{assn_1h}(d). $L_Y(V)$ and $\Lambda(Z,V)$ are nonsingular by Assumption \ref{assn_3h} and Assumption \ref{assn_1h}(f), while the diagonal elements of $\tilde \Lambda := \Lambda(0,0) \Lambda(1,0)^{-1} \Lambda(1,1) \Lambda(0,1)^{-1}$ are distinct by Assumption \ref{assn_2h}.

Then, evaluating (\ref{decomp*}) at $(Z,V)\in\{(0,0),(1,0),(0,1),(1,1)\}$ and repeating the argument in the proof of Proposition \ref{prop_1} under Assumptions \ref{assn_1h}--\ref{assn_3h} identify $L_T(Z)$, $L_Y(V)$, and $\Lambda(Z,V)$ for $(Z,V)\in\{(0,0),(1,0),(0,1),(1,1)\}$ up to the permutation of columns across different values of $S^*$. Assumption \ref{assn_1h}(d) identifies the ordering of the eigenvectors in $L_T(Z)$, which gives the identification of columns of $L_Y(V)$ and $\Lambda(Z,V)$. This identifies $\Pr(S^*|Z,V)$ and $\Pr(S|S^*,Z)$ for all $(S,S^*,Z,V)$ and $\Pr(Y\in \Delta_j|S^*,V)$ for $\{\Delta_j\}_{j=1}^K$ that satisfies Assumption \ref{assn_3h}.

For identifying $\alpha(U^*,V)$ and $\beta(U^*,V)$, observe that 
\[
E[Y|U^*,V,Z] = \alpha(U^*,V) + \beta(U^*,V)E[T^*|U^*,V,Z],
\]
because $E[\ve|U^*,V,Z]=0$ from (\ref{model_hetero}). It follows that
\begin{equation}\label{alpha_beta_hetero}
\begin{pmatrix}
E[Y|U^*,V,Z=0] \\
E[Y|U^*,V,Z=1] 
\end{pmatrix}
=
\begin{pmatrix}
1 & \Pr(T^*=1|U^*,Z=0,V) \\
1 & \Pr(T^*=1|U^*,Z=1,V) 
\end{pmatrix}
 \begin{pmatrix}
\alpha(U^*,V) \\
\beta(U^*,V)
\end{pmatrix} .
\end{equation}
We can identify $E[Y|U^*,V,Z]$ because we can derive $\Pr(Y\in \Delta_j,S^* =s|Z,V)$ from $\Pr(Y\in \Delta_j|S^*,V)$ and $\Pr(S^*|Z,V)$. We can derive $\Pr(T^*=1|U^*,Z,V)$ from $\Pr(S^*|Z,V)$. Therefore, the left hand side of (\ref{alpha_beta_hetero}) and the matrix on the right hand side of (\ref{alpha_beta_hetero}) are identified, and this matrix is invertible from Assumption \ref{assn_1h}(c). Consequently, $\alpha(U^*,V)$ and $\beta(U^*,V)$ are identified.

It remains to show the identification of $\Pr(Y\in \Delta|S^*,V)$ for any $\Delta$. For a partition $\bs{\Delta}:=\{\Delta_j\}_{j=1}^K$ that does not satisfy Assumption \ref{assn_3h}, define
\[
 L_{Y,\bs{\Delta}} (V) :=
\begin{bmatrix}
1&1&\cdots &1\\
\Pr(Y\in \Delta_1|S^*=s_1,V)&\Pr(Y\in \Delta_1|S^*=s_2,V)&\cdots& \Pr(Y\in \Delta_1|S^*=s_{K},V)\\
\vdots & \ddots & \vdots& \vdots  \\
\Pr(Y\in \Delta_{K-1}|S^*=s_1,V)&\Pr(Y\in \Delta_{K-1}|S^*=s_2,V)&\cdots& \Pr(Y\in \Delta_{K-1}|S^*=s_{K},V)
\end{bmatrix}
\]
and 
$Q_{Y,\bs{\Delta}}  (Z,V) : = L_T(Z) \Lambda(Z,V)  L_{Y,\bs{\Delta}}(V)$. Given the identification of $L_T(Z)$ and $\Lambda(Z,V)$, we may identify $L_{Y,\bs{\Delta}}(V)$ as  $ L_{Y,\bs{\Delta}}(V) =( \Lambda(Z,V) )^{-1}  ( L_T(Z) )^{-1} Q_{Y,\bs{\Delta}}  (Z,V)$. Because a partition $\bs{\Delta}:=\{\Delta_j\}_{j=1}^K$ is arbitrary, $\Pr(Y\in \Delta|S^*,V)$ is identified for any $\Delta$ for all $(S^*,V)$.
\end{proof}

\bibliography{mixture}

\begin{thebibliography}{42}
\newcommand{\enquote}[1]{``#1''}
\expandafter\ifx\csname natexlab\endcsname\relax\def\natexlab#1{#1}\fi

\bibitem[{Aigner(1973)}]{aigner73joe}
Aigner, D.~J. (1973), \enquote{Regression with a Binary Independent Variable
  Subject to Errors of Observation,} \textit{Journal of Econometrics}, 1,
  49--59.

\bibitem[{Almada et~al.(2016)Almada, McCarthy, and Tchernis}]{almada16ajae}
Almada, L., McCarthy, I., and Tchernis, R. (2016), \enquote{What Can We Learn
  About the Effects of Food Stamps on Obesity in the Presence of Misreporting?}
  \textit{American Journal of Agricultural Economics}, 98, 997--1017.

\bibitem[{Anderson(1954)}]{anderson54pcma}
Anderson, T.~W. (1954), \enquote{On Estimation of Parameters in Latent
  Structure Analysis,} \textit{Psychometrika}, 19, 1--10.

\bibitem[{Battistin et~al.(2014)Battistin, {De Nadai}, and
  Sianesi}]{battistin14joe}
Battistin, E., {De Nadai}, M., and Sianesi, B. (2014), \enquote{Misreported
  Schooling, Multiple Measures and Returns to Educational Qualifications,}
  \textit{Journal of Econometrics}, 181, 136--150.

\bibitem[{Bierens(1987)}]{bierens87book}
Bierens, H. (1987), \enquote{Kernel Estimators of Regression Functions,} in
  \textit{Advances in Econometrics, Vol. I}, ed. Bewley, T.~F., Cambridge
  University Press, chap.~4, pp. 99--144.

\bibitem[{Bingley and Martinello(2014)}]{bingley14wp}
Bingley, P. and Martinello, A. (2014), \enquote{Measurement error in the Survey
  of Health, Ageing and Retirement in Europe: A Validation Study with
  Administrative Data for Education Level, Income and Employment,} SHARE
  Working Paper Series 16-2014.

\bibitem[{Black et~al.(2003)Black, Sanders, and Taylor}]{black03jasa}
Black, D., Sanders, S., and Taylor, L. (2003), \enquote{Measurement of Higher
  Education in the Census and Current Population Survey,} \textit{Journal of
  American Statistical Association}, 98, 545--554.

\bibitem[{Black et~al.(2000)Black, Berger, and Scott}]{black00jasa}
Black, D.~A., Berger, M.~C., and Scott, F.~A. (2000), \enquote{Bounding
  Parameter Estimates with Nonclassical Measurement Error,} \textit{Journal of
  American Statistical Association}, 95, 739--748.

\bibitem[{Botosaru and Gutierrez(2018)}]{botosaru18jae}
Botosaru, I. and Gutierrez, F.~H. (2018), \enquote{Difference-in-differences
  when the Treatment Status is Observed in Only One Period,} \textit{Journal of
  Applied Econometrics}, 33, 73--90.

\bibitem[{Bound et~al.(2001)Bound, Brown, and Mathiowetz}]{boundetal01handbook}
Bound, J., Brown, C., and Mathiowetz, N. (2001), \enquote{Measurement Error in
  Survey Data,} in \textit{Handbook of Econometrics}, eds. Heckman, J.~J. and
  Leamer, E., Elsevier, vol.~5, pp. 3705--3843.

\bibitem[{Calvi et~al.(2019)Calvi, Lewbel, and Tommasi}]{calvi19wp}
Calvi, R., Lewbel, A., and Tommasi, D. (2019), \enquote{Women's Empowerment and
  Family Health: Estimating {LATE} with Mismeasured Treatment,} Rice
  University.

\bibitem[{Card(1993)}]{card93nber}
Card, D. (1993), \enquote{Using Geographic Variation in College Proximity to
  Estimate the Return to Schooling,} Tech. Rep. Working Paper No. 4483, NBER.

\bibitem[{Carneiro et~al.(2011)Carneiro, Heckman, and
  Vytlacil}]{carneiro_etal11aer}
Carneiro, P., Heckman, J., and Vytlacil, E. (2011), \enquote{Estimating
  Marginal Returns to Education,} \textit{American Economic Review}, 101,
  2754--2781.

\bibitem[{Carroll et~al.(2010)Carroll, Chen, and Hu}]{carrol10jnps}
Carroll, R.~J., Chen, X., and Hu, Y. (2010), \enquote{Identification and
  Estimation of Nonlinear Models Using Two Samples with Nonclassical
  Measurement Errors,} \textit{Journal of Nonparametric Statistics}, 22,
  379--399.

\bibitem[{Castex and Dechter(2014)}]{castex14jol}
Castex, G. and Dechter, E.~K. (2014), \enquote{The Changing Roles of Education
  and Ability in Wage Determination,} \textit{Journal of Labor Economics}, 32,
  685--710.

\bibitem[{De~Lathauwer(2006)}]{delathauwer06sjmaa}
De~Lathauwer, L. (2006), \enquote{A Link between the Canonical Decomposition in
  Multilinear Algebra and Simultaneous Matrix Diagonalization,} \textit{SIAM
  Journal on Matrix Analysis and Applications}, 28, 642--666.

\bibitem[{Deb and Gregory(2018)}]{deb18el}
Deb, P. and Gregory, C.~A. (2018), \enquote{Heterogeneous impacts of the
  Supplemental Nutrition Assistance Program on food insecurity,}
  \textit{Economics Letters}, 173, 55 -- 60.

\bibitem[{DiTraglia and Garc\'{\i}a-Jimeno(2019)}]{ditraglia19joe}
DiTraglia, F.~J. and Garc\'{\i}a-Jimeno, C. (2019), \enquote{Identifyng the
  Effect of a Mis-Classified, Binary, Endogenous Regressor,} \textit{Journal of
  Econometrics}, 209, 376--390.

\bibitem[{Fan and Park(2010)}]{fanpark10et}
Fan, Y. and Park, S. (2010), \enquote{Sharp Bounds on the Distribution of the
  Treatment Effects and their Statistical Inference,} \textit{Econometric
  Theory}, 26, 931--951.

\bibitem[{Heckman and Vytlacil(2007)}]{HeckmanVytlacil07handbook}
Heckman, J. and Vytlacil, E. (2007), \enquote{Econometric Evaluation of Social
  Programs, Part II: Using the Marginal Treatment Effect to Organize
  Alternative Econometric Estimators to Evaluate Social Programs, and To
  Forecast their Effects in New Environments,} in \textit{Handbook of
  Econometrics}, eds. Heckman, J. and Leamer, E., New York: North-Holland,
  vol.~6B, chap.~71, pp. 4875--5143.

\bibitem[{Heckman et~al.(2018)Heckman, Humphries, and Veramendi}]{heckman18jpe}
Heckman, J.~J., Humphries, J.~E., and Veramendi, G. (2018), \enquote{Returns to
  Education: The Causal Effects of Education on Earnings, Health, and Smoking,}
  \textit{Journal of Political Economy}, 126, S197--S246.

\bibitem[{Heckman et~al.(2006)Heckman, Urzua, and
  Vytlacil}]{heckman_etal06rest}
Heckman, J.~J., Urzua, S., and Vytlacil, E. (2006), \enquote{Understanding
  Instrumental Variables in Models with Essential Heterogeneity,}
  \textit{Review of Economics and Statistics}, 88, 389--432.

\bibitem[{Hu(2008)}]{hu08joe}
Hu, Y. (2008), \enquote{Identification and Estimation of Nonlinear Models with
  Misclassification Error Using Instrumental Variables: A General Solution,}
  \textit{Journal of Econometrics}, 144, 27--61.

\bibitem[{Hu et~al.(2015)Hu, Shiu, and Woutersen}]{hu_etal15ej}
Hu, Y., Shiu, J., and Woutersen, T. (2015), \enquote{Identification and
  Estimation of Single-index Models with Measurement Error and Endogeneity,}
  \textit{Econometrics Journal}, 18, 347--362.

\bibitem[{Hu et~al.(2016)Hu, Shiu, and Woutersen}]{hu_etal16el}
--- (2016), \enquote{Identification in Nonseparable Models with Measurement
  Errors and Endogeneity,} \textit{Economic Letters}, 144, 33--36.

\bibitem[{Imbens and Angrist(1994)}]{imbensangrist94ecma}
Imbens, G.~W. and Angrist, J.~D. (1994), \enquote{Identification and Estimation
  of Local Average Treatment Effects,} \textit{Econometrica}, 62, 467--475.

\bibitem[{Kane et~al.(1999)Kane, Rouse, and Staiger}]{kaneetal99nber}
Kane, T.~J., Rouse, C.~E., and Staiger, D. (1999), \enquote{Estimating Returns
  to Schooling when Schooling is Misreported,} Technical report, National Bureau of
  Economic Research.

\bibitem[{Kang and Moffitt(2019)}]{kang19soej}
Kang, K.~M. and Moffitt, R. (2019), \enquote{The Effect of SNAP and School Food
  Programs on Food Security, Diet Quality, and Food Spending: Sensitivity to
  Program Reporting Error,} \textit{Southern Economic Journal}, 86, 156--201.

\bibitem[{Kasahara and Shimotsu(2009)}]{kasaharashimotsu09em}
Kasahara, H. and Shimotsu, K. (2009), \enquote{{Nonparametric Identification of
  Finite Mixture Models of Dynamic Discrete Choices},} \textit{Econometrica},
  77, 135--175.

\bibitem[{Kreider et~al.(2012)Kreider, Pepper, Gundersen, and
  Jolliffe}]{kreider_etal12jasa}
Kreider, B., Pepper, J., Gundersen, C., and Jolliffe, D. (2012),
  \enquote{Identifying the Effects of SNAP (Food Stamps) on Child Health
  Outcomes when Participation is Endogenous and Misreported,} \textit{Journal
  of American Statistical Association}, 499, 958--975.

\bibitem[{Krueger and Rouse(1998)}]{krueger98jle}
Krueger, A. and Rouse, C. (1998), \enquote{The Effect of Workplace Education on
  Earnings, Turnover, and Job Performance,} \textit{Journal of Labor
  Economics}, 16, 61--94.

\bibitem[{Lewbel(2007)}]{lewbel07em}
Lewbel, A. (2007), \enquote{Estimation of Average Treatment Effects with
  Misclassification,} \textit{Econometrica}, 75, 537--551.

\bibitem[{Mahajan(2006)}]{mahajan06em}
Mahajan, A. (2006), \enquote{Identification and Estimation of Regression Models
  with Misclassification,} \textit{Econometrica}, 74, 631--665.

\bibitem[{Meyer et~al.(2020)Meyer, Mittag, and Goerge}]{meyer20jhr}
Meyer, B.~D., Mittag, N., and Goerge, R.~M. (2020), \enquote{Errors in Survey
  Reporting and Imputation and Their Effects on Estimates of Food Stamp Program
  Participation,} \textit{Journal of Human Resources}, forthcoming.

\bibitem[{Meyerhoefer and Pylypchuk(2008)}]{meyerhoefer08ajae}
Meyerhoefer, C.~D. and Pylypchuk, Y. (2008), \enquote{Does Participation in the
  Food Stamp Program Increase the Prevalence of Obesity and Health Care
  Spending?} \textit{American Journal of Agricultural Economics}, 90, 287--305.

\bibitem[{Newey and McFadden(1994)}]{neweymcfadden94hdbk}
Newey, W.~K. and McFadden, D.~L. (1994), \enquote{Large Sample Estimation and
  Hypothesis Testing,} in \textit{Handbook of Econometrics}, Amsterdam:
  North-Holland, vol.~4, pp. 2111--2245.

\bibitem[{Nguimkeu et~al.(2019)Nguimkeu, Denteh, and
  Tchernis}]{nguimkeu_etal19joe}
Nguimkeu, P., Denteh, A., and Tchernis, R. (2019), \enquote{On the Estimation
  of Treatment Effects with Endogenous Misreporting,} \textit{Journal of
  Econometrics}, 208, 487--506.

\bibitem[{Poirier(1980)}]{poirier80joe}
Poirier, D.~J. (1980), \enquote{Partial Observability in Bivariate Probit
  Models,} \textit{Journal of Econometrics}, 12, 209--217.

\bibitem[{Tommasi and Zhang(2020)}]{Tommasi20}
Tommasi, D. and Zhang, L. (2020), \enquote{{Bounding Program Benefits When
  Participation Is Misreported},} IZA Discussion Papers 13430, Institute of
  Labor Economics (IZA).

\bibitem[{Ura(2018)}]{ura18qe}
Ura, T. (2018), \enquote{Heterogeneous Treatment Effects with Mismeasured
  Endogenous Treatment,} \textit{Quantitative Economics}, 9, 1335--1370.

\bibitem[{Yanagi(2019)}]{yanagi19er}
Yanagi, T. (2019), \enquote{Inference on Local Average Treatment Effects for
  Misclassified Treatment,} \textit{Econometric Reviews}, 38, 938--959.

\bibitem[{Yen et~al.(2008)Yen, Andrews, Chen, and Eastwood}]{yen08ajae}
Yen, S.~T., Andrews, M., Chen, Z., and Eastwood, D.~B. (2008), \enquote{Food
  Stamp Program Participation and Food Insecurity: An Instrumental Variables
  Approach,} \textit{American Journal of Agricultural Economics}, 90, 117--132.

\end{thebibliography}

\end{document}